\documentclass[journal,twoside,web]{ieeecolor}

\usepackage{amsthm, amsfonts}
\usepackage{algorithmic}
\usepackage{array}
\usepackage[caption=false,font=normalsize,labelfont=sf,textfont=sf]{subfig}
\usepackage{stfloats}
\usepackage{url}
\usepackage{verbatim}
\usepackage{graphicx}
\def\BibTeX{{\rm B\kern-.05em{\sc i\kern-.025em b}\kern-.08em
		T\kern-.1667em\lower.7ex\hbox{E}\kern-.125emX}}
\usepackage{balance}
\usepackage{comment}
\usepackage{enumerate}
\usepackage[utf8]{inputenc}
\usepackage{textcomp}
\usepackage{microtype}
\usepackage{xcolor}
\usepackage{cite}
\usepackage{csquotes}
\usepackage{siunitx}
\usepackage{hyperref} 
\hypersetup{colorlinks=true,
	breaklinks=true,
	urlcolor= black,
	linkcolor= blue,
	bookmarksopen=false,
	filecolor=black,
	citecolor=blue,
	linkbordercolor=blue
}
\usepackage{mathtools}
\usepackage{tcolorbox}
\newtcbox{\mybox}[1]{nobeforeafter, colframe=black, colback=white, boxrule=0.5mm, width=\linewidth, arc=0mm, boxsep=0mm, left=5mm, right=5mm}
\usepackage{multirow}
\usepackage{booktabs}
\usepackage{colortbl}
\usepackage{multirow}
\usepackage{soul} 
\usepackage{MnSymbol}%
\usepackage{wasysym}
\usepackage{lcsys}

\newtheoremstyle{mytheoremstyle} 
{\topsep}                    
{\topsep}                    
{\itshape}                   
{}                           
{\scshape}                   
{.}                          
{.5em}                       
{}  
\theoremstyle{mytheoremstyle}
\usepackage{times}
\newtheorem{theorem}{Theorem}
\newtheorem{proposition}{Proposition}
\newcommand\numeq[1]%
{\stackrel{\scriptscriptstyle(\mkern-1.5mu#1\mkern-1.5mu)}{=}}
\newtheorem{lemma}{Lemma}
\newtheorem{remark}{Remark}
\newtheorem{definition}{Definition}
\newtheorem{assumption}{Assumption}
\newtheorem{problem}{Problem}

\newtheorem*{example*}{Example}

\newcommand{\norm}[1]{\left\lVert#1\right\rVert}
\newcommand{\mat}[1]{\begin{bmatrix}#1\end{bmatrix}}

\setlength{\abovedisplayskip}{6pt}
\setlength{\belowdisplayskip}{6pt}
\setlength{\abovedisplayshortskip}{0pt}
\setlength{\belowdisplayshortskip}{0pt}
\usepackage{eso-pic} 

\pagestyle{empty} 

\begin{document}

\AddToShipoutPictureBG*{%
  \AtPageUpperLeft{%
    \setlength{\unitlength}{1mm}%
    \put(0,-12){\makebox(\paperwidth,0)[c]{\parbox{0.8\textwidth}{\centering\textcolor{gray}{\large This paper has been accepted for publication in IEEE Control Systems Letters (L-CSS) ©IEEE}}}}
  }
}

\def\BibTeX{{\rm B\kern-.05em{\sc i\kern-.025em b}\kern-.08em
    T\kern-.1667em\lower.7ex\hbox{E}\kern-.125emX}}
\markboth{\journalname, VOL. XX, NO. XX, XXXX 2017}
{Author \MakeLowercase{\textit{et al.}}: Preparation of Papers for IEEE Control Systems Letters (August 2022)}

\title{
Data-Fused MPC with Guarantees: Application to Flying Humanoid Robots
}

\author{Davide Gorbani$^{\star}$, Mohamed Elobaid$^{\star}$, Giuseppe L'Erario, \\ Hosameldin Awadalla Omer Mohamed and Daniele Pucci  
\thanks{$^{\star}$These authors contributed equally to this paper}
\thanks{Davide Gorbani, Giuseppe L'Erario, Hosameldin A. O. Mohamed and Daniele Pucci are with the Artificial and Mechanical Intelligence (AMI), Istituto Italiano di Tecnologia (IIT), Genoa, Italy (email:  {davide.gorbani@iit.it}; {giuseppe.lerario@iit.it}; {hosameldin.mohamed@iit.it};  {daniele.pucci@iit.it})}%
\thanks{Mohamed Elobaid is with the Robotics, Intelligent Systems and Control, KAUST,  Thuwal, Mecca Province, Saudi Arabia (email: {mohamed.elobaid@kaust.edu.sa}) }%
}

\maketitle
\thispagestyle{empty}

\begin{abstract}
This paper introduces a Data-Fused Model Predictive Control (DFMPC) framework that combines physics-based models with data-driven representations of unknown dynamics. Leveraging Willems’ Fundamental Lemma and an artificial equilibrium formulation, the method enables tracking of changing, potentially unreachable setpoints while explicitly handling measurement noise through slack variables and regularization. We provide guarantees of recursive feasibility and practical stability under input–output constraints for a specific class of reference signals. The approach is validated on the iRonCub flying humanoid robot, integrating analytical momentum models with data-driven turbine dynamics. Simulations show improved tracking and robustness compared to a purely model-based MPC, while maintaining real-time feasibility.\looseness=-1

Code: \url{https://github.com/ami-iit/paper_gorbani_elobaid_2025_lcss_df_mpc-ironcub}.
\end{abstract}

\begin{IEEEkeywords}
Data-driven control, Model predictive control, Robotics
\end{IEEEkeywords}

\vspace{-8pt}
\section{Introduction}
\vspace{-3pt}
\IEEEPARstart{M}{ODEL} Predictive Control (MPC) is a cornerstone of modern control theory, valued for handling multivariable systems while enforcing explicit constraints. Recent developments have extended MPC to data-driven frameworks, reducing reliance on complete parametric models. Willems' Fundamental Lemma~\cite{willem} and behavioral system theory provide a basis for representing unknown linear time-invariant (LTI) dynamics directly from measured input--output trajectories, enabling predictive control without an explicit identification step.\looseness=-1

This paradigm has led to influential methods, from early data-driven predictive controllers~\cite{yang2015data} to the widely adopted DeePC algorithm~\cite{coulson2019data}. In this line of work, non-parametric predictors are constructed from a single sufficiently rich trajectory, and stability/robustness analyses have been developed in~\cite{berberich2021tac,bongard2022robust}. However, purely data-driven approaches can be sensitive to measurement noise~\cite{breschi2023data}, where corrupted offline data may degrade performance and invalidate guarantees. They may also incur significant computational burden~\cite{zhang2023dimension}, since the optimization size grows with the data length, which can hinder real-time deployment in high-dimensional systems.\looseness=-1

\begin{figure}
    \centering
    \includegraphics[width=0.95\linewidth]{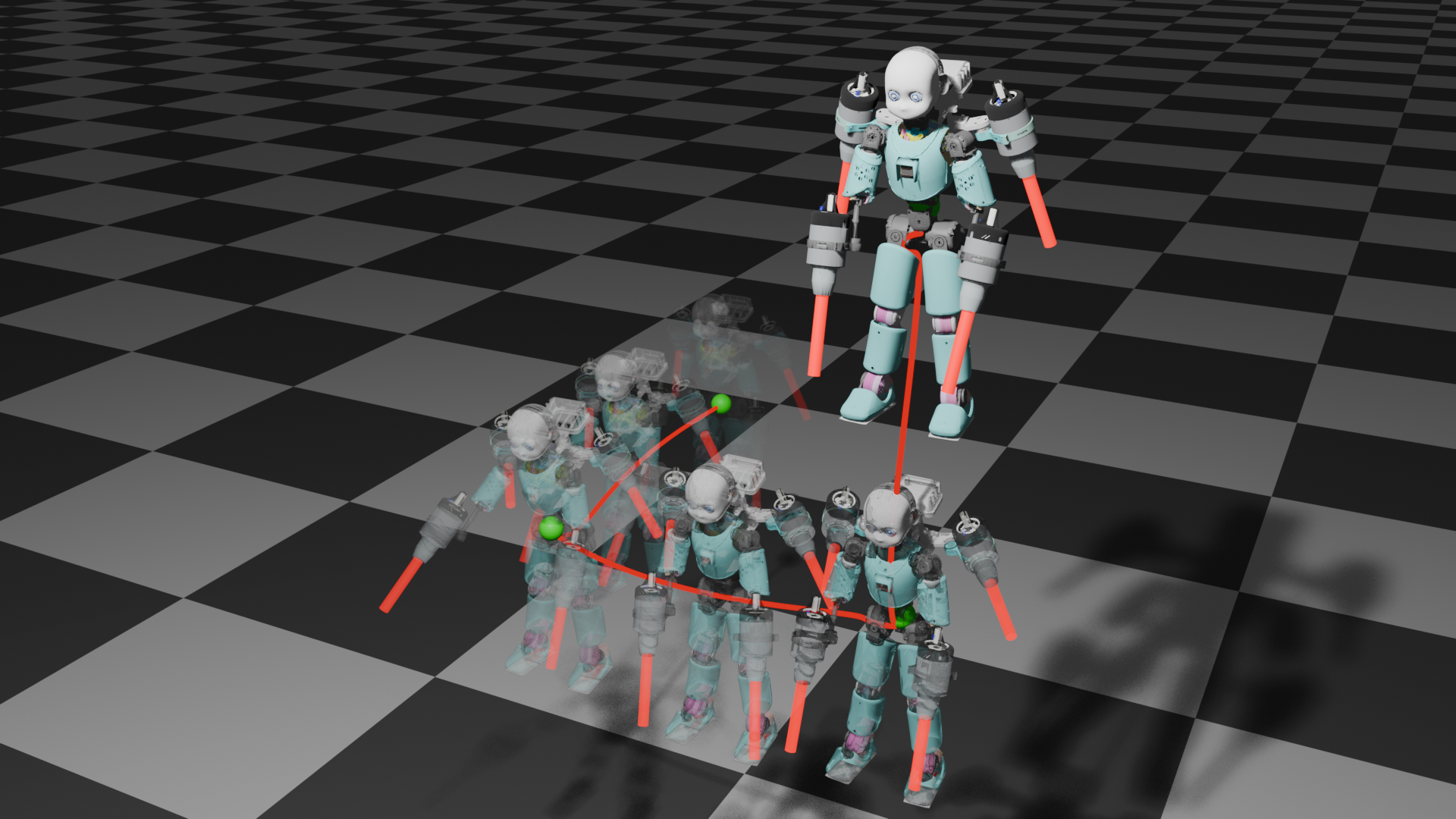}
    \caption{Snapshot of the iRonCub robot tracking a set of desired points.}
    \label{fig:snapshot_ironcub}
    \vspace{-10pt}
\end{figure}

Hybrid modeling schemes mitigate these issues by combining physics-based models for well-understood subsystems with data-driven representations for uncertain dynamics. For instance,~\cite{watson2025hybrid} incorporates model knowledge into DeePC-like formulations to reduce complexity, while~\cite{zieglmeier2025semi} proposes a semi-data-driven MPC scheme that leverages limited parametric knowledge and models residual errors. Nonetheless, the simultaneous treatment of noisy measurements, constrained operation, and reference tracking with guarantees remains challenging.\looseness=-1

In this paper, we propose a Data-Fused Model Predictive Control (DFMPC) framework that integrates model-based and data-driven dynamics, explicitly accounts for noisy measurements through slack variables and regularization, and enables tracking of changing (possibly unreachable) setpoints via artificial equilibria. The artificial equilibrium concept was introduced in~\cite{limon2008mpc} for tracking piecewise-constant references in linear MPC and was adapted to data-driven setpoint tracking in~\cite{Berberich2020Tracking}. We establish guarantees of recursive feasibility and practical stability under input--output constraints for a class of reference signals. The framework is applied to the iRonCub flying humanoid robot, combining analytical momentum models with a data-driven representation of the jet engines using an online adaptation strategy~\cite{berberich2022linear}. Through an ablation study (fixed vs.\ online Hankel updates) and benchmarks against robust semi-data-driven~\cite{zieglmeier2025semi} and model-based MPC~\cite{gorbani2025unified}, we show improved tracking while maintaining real-time computational feasibility.\looseness=-1

\vspace{-6pt}
\section{Background and Setting}
\label{sec:background}
\vspace{-3pt}

\subsection{Preliminaries and notation}
\label{sec:notation}
\vspace{-3pt}
Let $\mathbf{x}=\{x(k)\}_{k=0}^{N-1}$ denote a finite sequence of vectors with $x(k)\in\mathbb{R}^m$. We use bold lowercase for finite sequences (e.g. $\mathbf{x},\mathbf{u},\mathbf{y}$) and plain lowercase for their elements (e.g. $x(k)$, $u(k)$, $y(k)$).  For a sequence $\mathbf{x}$ and indices $a\le b$
$
\mathbf{x}_{[a,b]} \;=\; \big(x(a),\,x(a+1),\,\dots,\,x(b)\big)
$
the subsequence from time $a$ to $b$. Sequence $\mathbf{x}_1$ is appended to the tail of $\mathbf{x}_2$ by writing $\mathbf{x}_1 \oplus \mathbf{x}_2$. 
Given $\mathbf{x}=\{x(k)\}_{k=0}^{N-1}$ with $x(k)\in\mathbb{R}^m$ and an integer $L\ge1$, the order-$L$ Hankel matrix of $\mathbf{x}$ is
\[
H_L(\mathbf{x})
\;=\;
\begin{pmatrix}
x(0) & x(1) & \cdots & x(N-L)\\
x(1) & x(2) & \cdots & x(N-L+1)\\
\vdots & \vdots & \ddots & \vdots\\
x(L-1) & x(L) & \cdots & x(N-1)
\end{pmatrix},
\]
where each $x(i)$ is a column block in $\mathbb{R}^m$. Hence $H_L(\mathbf{x})\in\mathbb{R}^{mL\times (N-L+1)}$.
The sequence $\mathbf{x}$ is \emph{persistently exciting of order $L$} if and only if
$
\operatorname{rank}\big(H_L(\mathbf{x})\big)=mL.
$
Let ${\mathbf{u}=\{u(k)\}_{k=0}^{N-1}}$ and $\mathbf{y}=\{y(k)\}_{k=0}^{N-1}$ be input and output sequences of an unknown linear time-invariant (LTI) system. The pair $\{\mathbf{u},\mathbf{y}\}$ is a \emph{trajectory} of an LTI system of order $n$ if there exists a state sequence $\mathbf{x}=\{x(k)\}_{k=0}^{N-1}$ with $x(k)\in\mathbb{R}^n$ and a state $x(0)=x^\circ$ such that, for all $k=0,\dots,N-2$,
\[
x(k+1)=A x(k)+B u(k),\qquad y(k)=C x(k)+D u(k).
\]
The following instrumental result shows that a \textit{direct} nonparametric representation of an unknown LTI system can be made from a single input-output data sequence, provided that the input sequence is persistently exciting of a specific order. 

\vspace{-2pt}
\begin{theorem}[Willems' fundamental lemma]
Let $\{\mathbf{u}^d,\mathbf{y}^d\}$ be a trajectory of an LTI system of order $n$ and suppose $\mathbf{u}^d$ is persistently exciting of order $L+n$. Then any input--output sequence $\{\bar{\mathbf{u}},\bar{\mathbf{y}}\}$ of length $L$ is a trajectory of the same system if and only if there exists a vector $g\in\mathbb{R}^{N-L+1}$ such that
\begin{equation}\label{eq:fundamental-lemma}
\begin{pmatrix}
H_{L}(\mathbf{u}^d)\\[2pt]
H_{L}(\mathbf{y}^d)
\end{pmatrix} g
\;=\;
\begin{pmatrix}
\bar{\mathbf{u}}\\[2pt]
\bar{\mathbf{y}}
\end{pmatrix}, \vspace{-9pt}
\end{equation}
\hfill $\vartriangleleft$
\end{theorem}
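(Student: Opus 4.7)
The plan is to split the proof into the two directions of the biconditional, treating the easy (sufficiency) direction first and reserving the bulk of the effort for the necessity direction, where persistence of excitation plays its decisive role.

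For the \emph{if} direction, I would exploit the linearity of the LTI trajectory space. Each column of the stacked matrix in \eqref{eq:fundamental-lemma} is itself a length-$L$ input--output trajectory of the underlying system, since it is obtained by restricting the given long trajectory $\{\mathbf{u}^d,\mathbf{y}^d\}$ to a sliding window of length $L$. Because the set of length-$L$ trajectories of an LTI system is a linear subspace of $\mathbb{R}^{(m+p)L}$, any linear combination $g$ of these column trajectories is again a valid trajectory. This immediately yields that $\{\bar{\mathbf{u}},\bar{\mathbf{y}}\}$ is a trajectory whenever \eqref{eq:fundamental-lemma} admits a solution $g$.

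For the \emph{only if} direction, I would argue by dimension counting. The space of length-$L$ trajectories of the system is parametrized by an initial state $\bar{x}(0)\in\mathbb{R}^n$ and an input sequence $\bar{\mathbf{u}}\in\mathbb{R}^{mL}$, hence has dimension at most $n+mL$. Given $\bar{x}(0)$ and $\bar{\mathbf{u}}$, the output $\bar{\mathbf{y}}$ is uniquely fixed by the state-space equations. Therefore it suffices to find $g$ such that
\[
\begin{pmatrix} X_0 \\ H_L(\mathbf{u}^d) \end{pmatrix} g
\;=\;
\begin{pmatrix} \bar{x}(0) \\ \bar{\mathbf{u}} \end{pmatrix},
\]
where $X_0=[x^d(0),x^d(1),\dots,x^d(N-L)]$ collects the states at the start of every sliding window of the offline trajectory: once this holds, linearity of the state and output maps forces $H_L(\mathbf{y}^d)g=\bar{\mathbf{y}}$. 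The problem then reduces to proving that the stacked state--input matrix has full row rank $n+mL$.

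The main obstacle, and the heart of Willems' result, is precisely this rank statement. I would establish it under the (standard, implicit) assumption that $(A,B)$ is controllable, by contradiction: suppose there exist $\xi\in\mathbb{R}^n$ and $\eta\in\mathbb{R}^{mL}$, not both zero, with $\xi^\top X_0+\eta^\top H_L(\mathbf{u}^d)=0$. Using the state recursion to expand $X_0$ in terms of $x^d(0)$ and past inputs, one rewrites this identity as a single linear relation of the form $\zeta^\top H_{L+n}(\mathbf{u}^d)=0$ for some nonzero $\zeta$ built from $\xi,\eta$, and the controllability/Cayley--Hamilton structure of $(A,B)$. Persistence of excitation of order $L+n$ then gives $\operatorname{rank}H_{L+n}(\mathbf{u}^d)=m(L+n)$, forcing $\zeta=0$ and hence $\xi=0$, $\eta=0$, a contradiction. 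The delicate bookkeeping lies in showing that the constructed $\zeta$ is indeed nonzero whenever $(\xi,\eta)\neq 0$, which is where controllability is invoked; I would isolate this step as a standalone lemma to keep the main argument clean.
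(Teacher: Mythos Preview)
The paper does not supply a proof of this theorem: Willems' fundamental lemma is stated in Section~\ref{sec:notation} as a known background result, with the citation \cite{willem}, and no argument is given. So there is nothing in the paper to compare your proposal against.

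That said, your outline is the standard state-space proof and is essentially correct. The sufficiency direction via linearity of the length-$L$ behavior is clean and complete as you wrote it. For necessity, the reduction to full row rank of $\begin{pmatrix} X_0 \\ H_L(\mathbf{u}^d)\end{pmatrix}$ is the right move, and the rank lemma you sketch is indeed the heart of the matter.

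One point to tighten: your phrase ``expand $X_0$ in terms of $x^d(0)$ and past inputs'' is slightly misleading, since writing $x^d(k)=A^k x^d(0)+\sum_{j<k}A^{k-1-j}Bu^d(j)$ leaves the single unknown $x^d(0)$ in every column and does not directly collapse $\xi^\top X_0$ into a row of an input Hankel matrix. The usual argument instead iterates the one-step recursion on the \emph{relation} $\xi^\top x^d(k)+\eta^\top(u^d(k),\dots,u^d(k+L-1))=0$ itself, producing analogous relations with $\xi^\top A^j$ in place of $\xi^\top$, over a shrinking index range; Cayley--Hamilton then lets you take a linear combination that annihilates the state term, and controllability of $(A,B)$ ensures the surviving input coefficients---built from $\xi^\top A^jB$---are not all zero. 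That is exactly the $\zeta\neq 0$ step you flagged for a standalone lemma, so you have the right structure; just be careful that the mechanism is iteration of the kernel relation rather than a one-shot substitution for $X_0$.
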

\vspace{-7pt}

\noindent A constant pair $(u^s,y^s)$ is an \emph{equilibrium} of an LTI system (with realization $(A,B,C,D)$) if there exists $x^s\in\mathbb{R}^n$ with
\[
(I - A)x^s =  B u^s,\qquad y^s = C x^s + D u^s.
\]
Equivalently, in the data-driven setting (with $n$ the unknown system order), the constant pair $(u^s,y^s)$ repeated $n$-times form a trajectory if and only if there exists $g$ satisfying
\[
H_n(\mathbf{u}^d)\,g = \mathbf{1}_n\otimes u^s,\qquad
H_n(\mathbf{y}^d)\,g = \mathbf{1}_n\otimes y^s,
\]
where $\mathbf{1}_n$ is the $n$-vector of ones and $\otimes$ denotes the Kronecker product.
Given a (piecewise-constant) reference pair $(u_{\mathrm{ref}},y_{\mathrm{ref}})$ and positive definite weighting matrices $S\succ 0$ and $T\succ 0$, we define the \emph{optimal reachable equilibrium} as the solution of the convex problem
\begin{equation}\label{eq:reachable-equilibrium}
\begin{aligned}
J_s^\star(u_{\mathrm{ref}},y_{\mathrm{ref}}) \;=\; & \min_{u^s,y^s,g}\ \|u^s-u_{\mathrm{ref}}\|_S^2 + \|y^s-y_{\mathrm{ref}}\|_T^2\\[2pt]
\text{s.t. }\ & \begin{bmatrix}
    H_n(\mathbf{u}^d) \\
    H_n(\mathbf{y}^d)
\end{bmatrix} g = 
\begin{bmatrix}
    \mathbf{1}_n\otimes u^s \\
    \mathbf{1}_n\otimes y^s
\end{bmatrix} \\
& (u^s,y^s)\in\mathcal{U}\times\mathcal{Y},
\end{aligned}
\end{equation}
where $\|v\|_S^2 := v^\top S v$.

\vspace{-3pt}
\subsection{Modeling and Problem Statement}
\begin{figure}[!t]
    \centering
    \includegraphics[width=0.8\columnwidth]{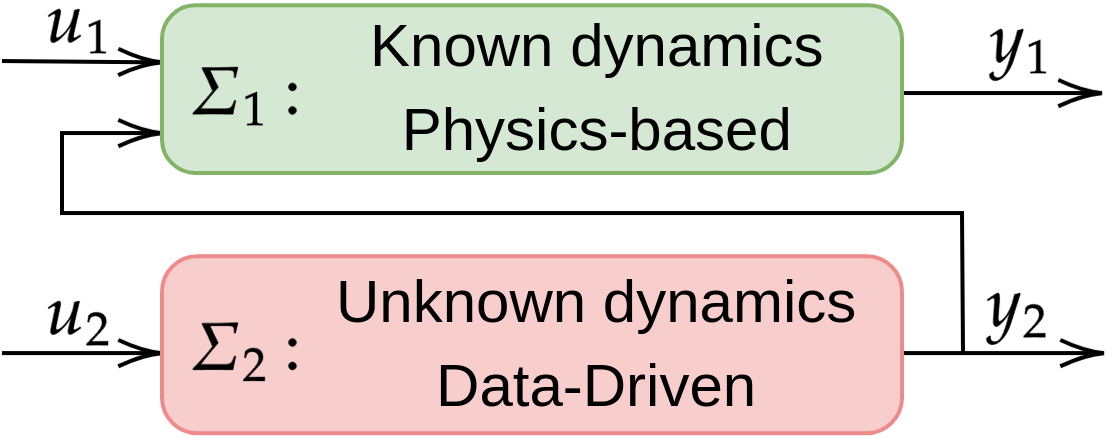}
    \vspace{-2pt}
    \caption{Structure of the considered composite system.}
    \label{fig:sysStruct}
    \vspace{-12pt}
\end{figure}
\vspace{-2pt}
We consider two interconnected subsystems with \emph{unidirectional coupling} motivated by our flying humanoid setting (the output of $\Sigma_2$ influences $\Sigma_1$, but not vice-versa as in Fig.~\ref{fig:sysStruct}):
\begin{itemize}
\item $\Sigma_{1}$ with known LTI dynamics,
\begin{align*}
    x_1(k+1) &= A_1 x_1(k) + B_1 u_1(k) + E_1 y_2(k), \notag \\
    y_1(k) &= C_1 x_1(k).
\end{align*}
\item $\Sigma_{2}$ with unknown dynamics, represented by a single input–output trajectory $\{\mathbf{u}_2^d,\mathbf{y}_2^d\}$ of length $N$.
\end{itemize}
The composite state is $z = [x_1^T, x_2^T]^T \in \mathbb{R}^{n_c}$, with $n_c = n_1 + n_2$. Inputs and outputs are \(u = [u_1^T, u_2^T]^T,\; y = [y_1^T, y_2^T]^T\). Inputs and outputs must satisfy polytopic constraints: \(u_k \in \mathcal{U},\; y_k \in \mathcal{Y}\), reflecting actuator limits and saturations.
\vspace{-3pt}
\begin{assumption}\label{ass:subsystem-properties}
Throughout the rest of this document, we assume the following hold
    \begin{enumerate}
        \item Subsystem $\Sigma_1$ is minimal, i.e., the pair $(A_1, B_1)$ is controllable and the pair $(C_1, A_1)$ is observable.
        \item The input-output behaviour of $\Sigma_2$ can be explained by a minimal realization with no feed-through, i.e., $(A_2, B_2)$ are controllable, and $(C_2, A_2)$ observable, and $D_2 = 0$.
        \item The input data trajectory $\mathbf{u}_2^d$ is persistently exciting of order $L+2\,n$, where $n = \max\{n_1, n_2\}$.
        \item System $\Sigma_2$ is affected by an additive measurement noise, i.e, $\mathbf{y}_2^d = \mathbf{y}_2^n + \mathbf{\delta}$. Moreover, Let $\epsilon = \norm{\mathbf{\delta}}_\infty$ be an upper bound on the noise. The same noise and bound apply to the measured output. $\Sigma_1$ is noise-free. \hfill $\vartriangleleft$
    \end{enumerate}
\end{assumption}

\vspace{-3pt}
\begin{remark}\label{IOSS}
Given Assumption \ref{ass:subsystem-properties}, then the composite system $\Sigma_c$ is controllable and observable. To see this, recall that $\Sigma_2$ can be explained by the triplet $(A_2,\ B_2,\ C_2)$ corresponding to an unknown minimal realization. Writing the overall system as $z(k+1) = A_c z(k) + B_c u(k)$, $y(k) = C_c z(k)$ with matrices
\[ A_c = \mat{A_1 & E_1 C_2 \\ 0 & A_2}, \quad B_c = \mat{B_1 & 0 \\ 0 & B_2}, \quad C_c = \mat{C_1 & 0 \\ 0 & C_2}, \]
the Hautus test \cite{hautus} verifies minimality of $\Sigma_c$.
A consequence of the above\cite{ioss} is that there exists a quadratic positive definite IOSS-Lyapunov function $W(z) = z^T P z$ for $\Sigma_c$ satisfying for $P > 0$ and some $c_1, \ c_2 > 0$
\begin{align*}
     W(z(k+1)) - W(z(k)) 
    &\leq -\frac{1}{2}\norm{z(k)}_2^2 \\
    &\quad + c_1\norm{u(k)}_2^2 
          + c_2\norm{y(k)}_2^2.
\end{align*}
\end{remark}
\begin{problem}\label{pr:main-problem}
    Given a piece-wise constant reference signal $\{u_{\text{ref}}, y_{\text{ref}}\}$ coming from a high-level planner, and let Assumption \ref{ass:subsystem-properties} hold, design a control input such 
    \begin{enumerate}[$(i)$]
		\item The tracking error remains bounded
		\item The closed-loop feedback system is \textit{Lyapunov stable} in a practical sense \hfill $\vartriangleleft$
	\end{enumerate} 
\end{problem}

\section{Main Results}
\label{sec:mainResults}
In this section, we detail the proposed Data-Fused Model Predictive Control scheme that solves Problem \ref{pr:main-problem}. In addition, several claims are made concerning guarantees for closed-loop performance of the proposed controller.

\subsection{The Proposed DFMPC Scheme}

At each sampling time $k$, we solve, in receding horizon fashion, the convex program
\begin{subequations}\label{eq:ocp}
\begin{equation}\label{eq:ocp-cost}
\begin{aligned}
  J_L^\star(k) &\;=\; \min_{\substack{g(k),\,\mathbf{u},\mathbf{y},\mathbf{x}_1,\\ u^s(k),\,y^s(k),\,x_1^s(k), \,\boldsymbol{\sigma}}}
  \Bigg\{ 
   \sum_{i=0}^{L-1} \Big( 
        \|y(i)-y^s(k)\|_Q^2\\
      &+ \|u(i)-u^s(k)\|_R^2 \Big)
      + \|\sigma(k)\|_\Gamma^2  
   + \|g(k)\|_\Lambda^2 \\ & 
    + \|y^s(k)-y_{\mathrm{ref}}(k)\|_T^2
    + \|u^s(k)-u_{\mathrm{ref}}(k)\|_S^2
  \Bigg\}
\end{aligned}
\end{equation}
subject to
\begin{equation}\label{eq:ocp-dynamics}
\begin{aligned}
  x_1(i+1) &= A_1 x_1(i) + B_1 u_1(i) + E_1 y_2(i),\\
  y_1(i)   &= C_1 x_1(i), \qquad i=0,\dots,L-1,\\[4pt]
  \begin{bmatrix} H_{L+n_2}(\mathbf{u}_2^d) \\[2pt] H_{L+n_2}(\mathbf{y}_2^d) \end{bmatrix} g(k)
   &= \begin{bmatrix} \mathbf{u}_2 \\[2pt] \mathbf{y}_2 + \boldsymbol{\sigma} \end{bmatrix},
\end{aligned}
\end{equation}
with initialization
\begin{equation}\label{eq:ocp-init}
\begin{aligned}
  x_1(0) &= x_1^\circ(k),\\
(u_2, \ y_2)_{[-n_2,-1]} &= (u_2^\circ(k), \ y_2^\circ(k)),
\end{aligned}
\end{equation}
terminal conditions
\begin{equation}\label{eq:ocp-terminal}
\begin{aligned}
  x_1(L) &= x_1^s(k),\\
  (\mathbf{I}_{n_1} - A_1)x_1^s(k) &= B_1u_1^s(k) + E_1y_2^s(k), \\
  (u_2,\ y_2)_{[L:L+n_2-1]} &=
       (\mathbf{1}_{n_2}\otimes u_2^s(k),\ \;\mathbf{1}_{n_2}\otimes y_2^s(k)),
\end{aligned}
\end{equation}
and hard input–output constraints
\begin{equation}\label{eq:ocp-constraints}
    u(i), {u^s}\in\mathcal{U},\quad y(i),{y^s}\in\mathcal{Y},\qquad i=0,\dots,L-1.
\end{equation}
\end{subequations}

Where we drop the time index $k$ for compactness e.g, writing $y(i) := y(k+i|k)$ and similarly for other sequence variables. Here $Q,R,\Gamma,\Lambda\succ0$ are weighting matrices, $\boldsymbol{\sigma}$ is a slack variable compensating measurement noise in $\Sigma_2$, and $S,T\succ0$ are as in \eqref{eq:reachable-equilibrium}. 
The cost \eqref{eq:ocp-cost} penalizes deviation from an artificial equilibrium $(u^s,y^s)$ close to the reference, penalizes slacks; the regularization of the term $g$ enhances numerical conditioning and mitigates overfitting to measurement noise present in the Hankel matrix~\cite{berberich2021tac}, improving the robustness of the prediction. 
Constraints \eqref{eq:ocp-dynamics} describe the composite prediction model: $\Sigma_1$ via its known matrices $(A_1,B_1,C_1,E_1)$, and $\Sigma_2$ via the Hankel-based direct representation from data. 
\eqref{eq:ocp-init}--\eqref{eq:ocp-terminal} enforce past consistency and a terminal equilibrium tail, respectively, and \eqref{eq:ocp-constraints} enforces polytopic input–output constraints. 

Compared to \cite{Berberich2020Tracking}, two main differences arise: (i) we treat a ``hybrid'' setting, where only part of the system is modeled parametrically, while the other part is described directly by data; (ii) we explicitly account for measurement noise both in the offline data $\mathbf{y}_2^d$ and in the online measurements, cf. Assumption~\ref{ass:subsystem-properties}. 
We adopt the following standard assumption.

\begin{assumption}\label{ass:prediction-horizon-length}
The OCP \eqref{eq:ocp} is feasible at the initial time. Moreover, the horizon length satisfies $L\ge2n$, where $n = \max\{n_1, n_2\}$. \hfill $\vartriangleleft$
\end{assumption}

At this point, we are in a position to state the following intermediate and helpful claim
\begin{proposition}[Output prediction error]\label{prop:prediction-error}
    Denote by $\mathbf{u}^\star(k)$ the optimal input sequence and by $\mathbf{y}^\star(k)$ the predicted optimal output sequence from solving (\ref{eq:ocp}) at time step $k$. Let $y(k+i)$ be the actual output vector of the composite system at time $k+i$ and let $y^\star_i(k)$ be the corresponding predicted optimal output, computed at time $k$. Then for $i \in \{0, \dots, L-1\}$
    \begin{align}\label{eq:error-bound}
        \| y(k+i) - y_i^\star(k) \|_\infty & \leq\Big(1 + \tilde{c}_{\Sigma_1}\Big) \Big(\tilde{c}_{\Sigma_2}  \big( \epsilon [\|g^\star(k)\|_1 + 1] \notag\\
        + &\|{\boldsymbol{\sigma}^\circ}^\star(k) \|_\infty \big) + \epsilon\|g^\star(k)\|_1 + \|\boldsymbol{\sigma}^\star(k)\|_\infty\Big)
    \end{align}
   where, ${\boldsymbol{\sigma}^\circ}^\star(k)$ are the initial $n_2$ values of $\boldsymbol{\sigma}^\star(k)$, $\mathcal{O}_{n_2}^\#$ is the Pseudo-Inverse of the observability matrix of $\Sigma_2$, and
   \begin{align*}
       \tilde{c}_{\Sigma_1} &= \max_{i \in \{0,..,L-1\}} \left( \|C_1\|_\infty \sum_{j=0}^{i-1} \|A_1^{i-1-j}\|_\infty \|E_1\|_\infty \right), \\
       \tilde{c}_{\Sigma_2} &= \max_{i \in \{0,..,L-1\}} \left( \|C_2\|_\infty \|A_2^{i}\|_\infty \norm{\mathcal{O}_{n_2}^\#}_\infty \right).
   \end{align*}
\end{proposition}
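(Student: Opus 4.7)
My strategy is to split the prediction error into the contribution from the known subsystem $\Sigma_1$ and that from the data-driven $\Sigma_2$, and to control the latter by comparing $\mathbf{y}_2^\star$ to a ``shadow trajectory'' of the \emph{true} $\Sigma_2$ implicitly encoded by $g^\star(k)$ once the offline measurement noise is removed. Since $(A_1,B_1,C_1,E_1)$ are known exactly and $x_1^\circ(k)$ is measured, the only source of error in predicting $y_1$ is the coupling term $E_1 y_2$; rolling out the known linear recursion yields $\|y_1(k+i)-y_{1,i}^\star(k)\|_\infty\le\tilde c_{\Sigma_1}\max_j\|y_2(k+j)-y_{2,j}^\star(k)\|_\infty$. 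Together with the trivial bound on the $\Sigma_2$ block, this produces the leading $(1+\tilde c_{\Sigma_1})$ factor in \eqref{eq:error-bound}, and reduces the remaining task to bounding the $\Sigma_2$ prediction error in infinity norm.

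Next, I define the shadow trajectory $\hat{\mathbf{y}}_2:=H_{L+n_2}(\mathbf{y}_2^n)\,g^\star(k)$ using the noise-free offline data $\mathbf{y}_2^n=\mathbf{y}_2^d-\boldsymbol{\delta}$. By Willems' lemma and Assumption~\ref{ass:subsystem-properties}, $(\mathbf{u}_2^\star,\hat{\mathbf{y}}_2)$ is a valid trajectory of the true $\Sigma_2$. Subtracting this from the OCP Hankel equation in \eqref{eq:ocp-dynamics} gives $\hat{\mathbf{y}}_2-\mathbf{y}_2^\star=\boldsymbol{\sigma}^\star-H_{L+n_2}(\boldsymbol{\delta})\,g^\star(k)$, and the elementary inequality $\|H_{L+n_2}(\boldsymbol{\delta})\,g^\star(k)\|_\infty\le\epsilon\,\|g^\star(k)\|_1$ then yields $\|\hat y_{2,i}-y_{2,i}^\star\|_\infty\le\epsilon\|g^\star(k)\|_1+\|\boldsymbol{\sigma}^\star(k)\|_\infty$, which accounts for the two right-most terms inside the parenthesis of \eqref{eq:error-bound}.

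The final piece is the distance between the shadow $\hat{\mathbf{y}}_2$ and the actual $\mathbf{y}_2$; since both are driven by $\mathbf{u}_2^\star$ through the same true $\Sigma_2$, this is a pure free response caused by a mismatch of initial conditions. Exploiting minimality and $D_2=0$ from Assumption~\ref{ass:subsystem-properties}(2), the first $n_2$ samples satisfy $\hat y_2^{[-n_2,-1]}-y_2^{\mathrm{true},[-n_2,-1]}=\mathcal{O}_{n_2}\bigl(\hat x_2-x_2^{\mathrm{true}}\bigr)$, so that the state mismatch is recovered by $\mathcal{O}_{n_2}^{\#}$. Using the OCP initialization \eqref{eq:ocp-init}, the online noise bound $\epsilon$ on $y_2^\circ(k)$, and the decomposition of the previous paragraph restricted to the past block, the left-hand side has infinity norm at most $\epsilon(1+\|g^\star(k)\|_1)+\|\boldsymbol{\sigma}^{\circ\star}(k)\|_\infty$. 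Applying $\mathcal{O}_{n_2}^{\#}$ and propagating the state mismatch through $C_2 A_2^{i}$ produces exactly the $\tilde c_{\Sigma_2}$ factor; summing with the bound from the previous paragraph and scaling by $(1+\tilde c_{\Sigma_1})$ closes the estimate. The hardest part is precisely this final bookkeeping, where three distinct noise contributions --- the offline Hankel noise $H_{L+n_2}(\boldsymbol{\delta})g^\star$, the online measurement noise entering through \eqref{eq:ocp-init}, and the initial segment $\boldsymbol{\sigma}^{\circ\star}$ of the slack --- must be tracked in parallel so that the pseudo-inverse $\mathcal{O}_{n_2}^{\#}$ applies cleanly (which relies on $D_2=0$ to cancel the Toeplitz input contribution) and yields the precise constant $\tilde c_{\Sigma_2}$.
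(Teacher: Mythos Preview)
Your proposal is correct and follows essentially the same route as the paper's proof: the paper likewise introduces the noise-free trajectory $H_{L+n_2}(\mathbf{y}_2^n)g^\star(k)$ (your ``shadow trajectory''), splits the $\Sigma_2$ error into the Hankel-noise/slack part $\epsilon\|g^\star\|_1+\|\boldsymbol\sigma^\star\|_\infty$ and the zero-input response to an initial-state mismatch recovered via $\mathcal{O}_{n_2}^{\#}$ (yielding $\tilde c_{\Sigma_2}$), and then propagates $e_{y_2}$ through the known $\Sigma_1$ convolution sum to obtain the $(1+\tilde c_{\Sigma_1})$ factor. Your explicit remark that $D_2=0$ is what makes the Toeplitz input block drop out so that $\mathcal{O}_{n_2}^{\#}$ applies directly is a nice clarification the paper leaves implicit.
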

\begin{proof}
Denote at time step $k+i$
\begin{align*}
    e_{y_1}(k+i) &:= y_1(k+i) - y_{1,i}^\star(k) \\
    e_{y_2}(k+i) &:= y_2(k+i) - y_{2,i}^\star(k),
\end{align*}
Following the logic of Lemma 2 in \cite{berberich2021tac}, for $\Sigma_2$ we have
\begin{equation}
    H_{L+n_2}(\mathbf{y}_2^d) g^\star(k) = \mathbf{y}_2^\star(k) + \boldsymbol{\sigma}^\star(k) ,\label{eq:proof_s2_dyn}
\end{equation}
re-arranging
\begin{equation*}
    H_{L+n_2}(\mathbf{y}_2^n) g^\star(k) + H_{L+n_2}(\mathbf{\delta}) g^\star(k) = \mathbf{y}_2^\star(k) + \boldsymbol{\sigma}^\star(k)
\end{equation*}
The noise-free trajectory $H_{L+n_2}(\mathbf{y}_2^n) g^\star(k)$ is a valid trajectory of $\Sigma_2$. Since this trajectory is generated by the same input sequence $\mathbf{u}_2^\star(k)$ as the actual trajectory, their difference is a zero-input response of $\Sigma_2$ to an initial state error. Rearranging the terms, the prediction error for $\Sigma_2$ is
\begin{align*}
    e_{y_2}(k+i) &= y_2(k+i) - y_{2,i}^\star(k) \\
                 &= y_2(k+i) - \big( (H_{L+n_2}(\mathbf{y}_2^n) g^\star(k))_i \\ &+ (H_{L+n_2}(\mathbf{\delta}) g^\star(k))_i - \sigma_i^\star(k) \big) \\
                 &:= \bar{y}_{2,i}(k) - (H_{L+n_2}(\mathbf{\delta}) g^\star(k))_i + \sigma_i^\star(k).
\end{align*}
Taking the norm and using the triangle inequality
\begin{align}
    \|e_{y_2}(k+i)\|_\infty &\leq \|\bar{y}_{2,i}(k)\|_\infty + \|(H_{L+n_2}(\mathbf{\delta}) g^\star(k))_i\|_\infty \notag \\&+ \|\sigma_i^\star(k)\|_\infty \label{eq:proof_s2_triangle}.
\end{align}
The term $\|\bar{y}_{2,i}(k)\|_\infty$ is the system's response to an initial error. This initial error is caused by the mismatch between the initial conditions of the system and those implied by $H_{L+n_2}(\mathbf{y}_2^n) g^\star(k)$, which in turn arises from the problem constraints, the noise $\mathbf{\delta}$, and the initial part of the slack variable, denoted $\sigma^\circ(k)$. This response is bounded by the gain of the system $\Sigma_2$, denoted $\tilde{c}_{\Sigma_2}$, multiplied by the magnitude of the initial error. The initial error sources include online measurement noise (bounded by $\epsilon$), historical noise ($\epsilon\|g^\star(k)\|_1$), and the initial slack ($\|{\sigma^\circ}^\star(k)\|_\infty$). The second term in \eqref{eq:proof_s2_triangle} is bounded element-wise by $\epsilon \|g^\star(k)\|_1$. Combining these facts gives \cite{berberich2021tac}
\begin{align}
    \|e_{y_2}(k+i)\|_\infty &\leq \tilde{c}_{\Sigma_2} \big( \epsilon [\|g^\star(k)\|_1 + 1] + \|{\sigma^\circ}^\star(k) \|_\infty \big) \notag \\&+ \epsilon\|g^\star(k)\|_1 + \|\mathbf{\sigma}^\star(k)\|_\infty \label{eq:proof_s2_bound}.
\end{align}
Now note that for $\Sigma_1$ we have
\begin{align*}
    x_1(k+i+1) &= A_1 x_1(k+i) + B_1 u_{1,i}^\star(k) + E_1 y_2(k+i) \\
    x_{1,i+1}^\star(k) &= A_1 x_{1,i}^\star(k) + B_1 u_{1,i}^\star(k) + E_1 y_{2,i}^\star(k).
\end{align*}
Denote $e_{x_1}(k+i) := x_1(k+i) - x_{1,i}^\star(k)$ and
\begin{equation*}
    e_{x_1}(k+i+1) = A_1 e_{x_1}(k+i) + E_1 e_{y_2}(k+i),
\end{equation*}
The initial condition constraint of the problem implies $x_1(k) = x_{1,0}^\star(k)$ (given no noise on $\Sigma_1$), so $e_{x_1}(k)=0$. The solution to this system is the zero-state response to the input $e_{y_2}(k+j)$, given by the convolution sum
\begin{equation*}
    e_{x_1}(k+i) = \sum_{j=0}^{i-1} A_1^{i-1-j} E_1 e_{y_2}(k+j).
\end{equation*}
The output error for $\Sigma_1$ is $e_{y_1}(k+i) = C_1 e_{x_1}(k+i)$. Taking the norm
\begin{align*}
    \|e_{y_1}(k+i)\|_\infty &\leq \|C_1\|_\infty \bigg\| \sum_{j=0}^{i-1} A_1^{i-1-j} E_1 e_{y_2}(k+j) \bigg\|_\infty \\
    &\leq \tilde{c}_{\Sigma_1} \left( \max_{j \in \{0,..,L-1\}} \|e_{y_2}(k+j)\|_\infty \right).
\end{align*}
The total system prediction error is $\|y(k+i) - y_i^\star(k)\|_\infty = \max(\|e_{y_1}(k+i)\|_\infty, \|e_{y_2}(k+i)\|_\infty)$. This is bounded by the sum of the individual bounds, which is a valid upper bound
\begin{align*}
    \|y(k+i) - y_i^\star(k)\|_\infty &\leq \|e_{y_1}(k+i)\|_\infty + \|e_{y_2}(k+i)\|_\infty \\
    &\leq \tilde{c}_{\Sigma_1} \left( \max_{j} \|e_{y_2}(k+j)\|_\infty \right) \\&+ \left( \max_{j} \|e_{y_2}(k+j)\|_\infty \right) \\
    &\leq (1 + \tilde{c}_{\Sigma_1}) \left( \max_{j} \|e_{y_2}(k+j)\|_\infty \right).
\end{align*}
Substituting \eqref{eq:proof_s2_bound} completes the proof. 
\end{proof}

From Proposition \ref{prop:prediction-error}, it is clear that to establish recursive feasibility, one should modify the output constraints (\ref{eq:ocp-constraints}), invoking some constraints tightening mechanism given the bound (\ref{eq:error-bound}). Alternatively, one could rely on some inherent property of the MPC formulation, given some stricter requirements on the reference signal and the regularization penalties on the cost. We will attempt the latter approach in the following subsection.  
\subsection{Recursive Feasibility}
Before delving into the technical details, the following definition is needed.

\begin{definition}[Signed-distance to boundary]\label{def:distance-to-set}
Let the polytopic output constraint set $\mathcal{Y} \subset \mathbb{R}^{p}$ be defined by $m_y$ linear inequalities, i.e.,
\[
\mathcal{Y} = \{y \in \mathbb{R}^{p} \mid E_y y \leq e_y \},
\]
where $E_y \in \mathbb{R}^{m_y \times p}$ and $e_y \in \mathbb{R}^{m_y}$.  
The $j$-th row of $E_y$ is denoted by $E_{y,j}$.  
The boundary of the set is denoted $\partial\mathcal{Y}$.  
The signed Euclidean distance of a point $y$ to the boundary is
\[
\mathrm{dist}(y, \partial\mathcal{Y}) \;=\; \min_{j \in \{1,\dots,m_y\}} \frac{e_{y,j} - E_{y,j}y}{\|E_{y,j}\|_2}.
\] \hfill \hfill $\vartriangleleft$
\end{definition}
Definition \ref{def:distance-to-set} establishes a signed distance for any point w.r.t. the boundary of the output constraint sets. Namely, the distance being positive implies the point is in the interior of the set, zero implies lying on the boundary, and negative implies being outside the constraints set (constraints violation).
\begin{lemma}[Violating output distance]\label{lem:geom-noV}
Let $\mathcal{Y} = \{y \in \mathbb{R}^p : E_y y \le e_y\}$ be a polytope as in (\ref{eq:ocp-constraints}).  
Fix $d>0$ and $\kappa \ge 0$ with $\sqrt{p}\,\kappa<d$.  
Suppose a predicted output $y^\star$ satisfies
\[
\mathrm{dist}(y^\star,\partial\mathcal{Y}) \;\ge\; d,
\]
and an actual output $y$ satisfies
\[
\|y - y^\star\|_\infty \;\le\; \kappa.
\]
Then:
\begin{enumerate}[$(i)$]
\item $\mathrm{dist}(y,\partial\mathcal{Y}) \;\ge\; d - \sqrt{p}\,\kappa$.
\item For any $\tilde y \notin \mathcal{Y}$, one has;
$\|\tilde y - y\|_2 \;>\; d - \sqrt{p}\,\kappa.
$
\end{enumerate}
\end{lemma}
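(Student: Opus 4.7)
The plan for part (i) is a straightforward perturbation bound on each defining inequality. For each row index $j \in \{1,\dots,m_y\}$, I would write
\[
e_{y,j} - E_{y,j} y \;=\; \bigl(e_{y,j} - E_{y,j} y^\star\bigr) \;+\; E_{y,j}\bigl(y^\star - y\bigr),
\]
divide through by $\|E_{y,j}\|_2$, and bound the second term by Cauchy--Schwarz,
\[
\frac{|E_{y,j}(y^\star - y)|}{\|E_{y,j}\|_2} \;\le\; \|y^\star - y\|_2.
\]
Invoking the norm equivalence $\|v\|_2 \le \sqrt{p}\,\|v\|_\infty$ on $v = y^\star - y$ together with the hypothesis $\|y^\star - y\|_\infty \le \kappa$ yields a per-row slack of at most $\sqrt{p}\,\kappa$. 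Since by assumption $(e_{y,j} - E_{y,j} y^\star)/\|E_{y,j}\|_2 \ge d$ for every $j$, taking the minimum over $j$ on both sides in the definition of $\mathrm{dist}(\cdot,\partial\mathcal{Y})$ produces $\mathrm{dist}(y,\partial\mathcal{Y}) \ge d - \sqrt{p}\,\kappa$ as claimed.

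For part (ii), I would leverage (i): since $\sqrt{p}\,\kappa < d$ by hypothesis, (i) shows $y$ lies strictly in the interior of $\mathcal{Y}$ with distance to $\partial\mathcal{Y}$ at least $d - \sqrt{p}\,\kappa > 0$. Given any $\tilde y \notin \mathcal{Y}$, convexity of $\mathcal{Y}$ together with $y$ being interior implies that the segment $[y,\tilde y]$ crosses $\partial\mathcal{Y}$ at a unique first point $y_b$. Then
\[
\|\tilde y - y\|_2 \;=\; \|\tilde y - y_b\|_2 + \|y_b - y\|_2,
\]
where the second summand is at least $\mathrm{dist}(y,\partial\mathcal{Y}) \ge d - \sqrt{p}\,\kappa$. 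The strict inequality then follows from the first summand being strictly positive: because $\mathcal{Y}$ is closed, $\tilde y \notin \mathcal{Y}$ lies in the open complement, so $\tilde y \ne y_b \in \partial\mathcal{Y} \subset \mathcal{Y}$.

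There is no real technical obstacle here; the proof is essentially a routine perturbation argument coupled with a convexity observation. The only subtlety worth flagging is the strict (rather than non-strict) inequality in (ii), which relies on the polytope being closed so that points outside it are bounded away from the specific boundary crossing $y_b$. The mild norm-equivalence factor $\sqrt{p}$ is what couples the $\infty$-norm prediction-error bound (as it naturally appears in Proposition \ref{prop:prediction-error}) to the Euclidean-geometric signed distance of Definition \ref{def:distance-to-set}, and this is the sole place where the dimension $p$ enters the estimate.
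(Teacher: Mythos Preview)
Your argument for part~(i) is identical to the paper's: rewrite the normalized slack at each facet as that at $y^\star$ plus a perturbation, bound the perturbation via Cauchy--Schwarz and $\|v\|_2\le\sqrt{p}\,\|v\|_\infty$, and minimize over $j$. For part~(ii) you take a slightly different route. The paper argues purely algebraically: pick a violated facet $j^\ast$ with $e_{y,j^\ast}-E_{y,j^\ast}\tilde y<0$, subtract the inequality $e_{y,j^\ast}-E_{y,j^\ast}y\ge(d-\sqrt{p}\,\kappa)\|E_{y,j^\ast}\|_2$ supplied by~(i), and apply Cauchy--Schwarz to $E_{y,j^\ast}(\tilde y-y)$. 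Your geometric version---intersecting the segment $[y,\tilde y]$ with $\partial\mathcal{Y}$ and splitting the length additively---is correct, but the step ``$\|y_b-y\|_2\ge\mathrm{dist}(y,\partial\mathcal{Y})$'' silently requires the same facet-based Cauchy--Schwarz estimate (applied to an active facet at $y_b$), since $\mathrm{dist}(\cdot,\partial\mathcal{Y})$ here is the \emph{signed} facet distance of Definition~\ref{def:distance-to-set} rather than the Euclidean distance to the boundary set. So the two proofs are equivalent in content; the paper's is a line shorter because it avoids introducing the intermediate boundary point $y_b$.
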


\begin{proof}
To show $(i)$, for each facet $j$, define $\phi_j(y) := (e_{y,j} - E_{y,j}y)/\|E_{y,j}\|_2$.  
Then
\[
\phi_j(y) = \phi_j(y^\star) + \frac{E_{y,j}(y^\star-y)}{\|E_{y,j}\|_2}.
\]
Since $\|y-y^\star\|_\infty \le \kappa$, we have 
\[
|E_{y,j}(y^\star-y)| \;\le\; \|E_{y,j}\|_2 \|y^\star-y\|_2 
\;\le\; \|E_{y,j}\|_2 \sqrt{p}\,\kappa,
\]
where we use $\ \norm{v}_2 \le \sqrt{p}\norm{v}_\infty$ for any $v \in \mathbb{R}^p$. Thus $\phi_j(y) \ge \phi_j(y^\star) - \sqrt{p}\,\kappa$.  
Taking the minimum over $j$ yields 
\(
\mathrm{dist}(y,\partial\mathcal{Y}) \ge d - \sqrt{p}\,\kappa.
\)
Concerning $(ii)$, let $\tilde y \notin \mathcal{Y}$. Then there exists a facet $j^\ast$ with $e_{y,j^\ast} - E_{y,j^\ast}\tilde y < 0$.  
From $(i)$, $e_{y,j^\ast} - E_{y,j^\ast}y \ge (d-\sqrt{p}\,\kappa)\|E_{y,j^\ast}\|_2$.  
Subtracting gives
\[
E_{y,j^\ast}(\tilde y - y) \;>\; (d-\sqrt{p}\,\kappa)\,\|E_{y,j^\ast}\|_2.
\]
By Cauchy–Schwarz, 
\(
\|\tilde y-y\|_2 > d - \sqrt{p}\,\kappa.
\) 
\end{proof}

\begin{proposition}[Feasible candidate under a safe reference]
\label{prop:feasible-candidate}
Suppose Assumptions \ref{ass:subsystem-properties}--\ref{ass:prediction-horizon-length} hold and Problem \ref{pr:main-problem} is feasible at time $k$ for $z(k)\in\mathcal{Z}_f$, a compact feasible set. Let $J_L^\star(k)\le V_{\max}$\footnote{$V_{\max}$ will be better characterized in the following subsection} for all $z(k)\in\mathcal{Z}_f$, and let $d_{\mathrm{ref}}=\mathrm{dist}(y_{\mathrm{ref}},\partial\mathcal{Y})$. Define, whenever possible
\[
d_{\mathrm{safe}}:=d_{\mathrm{ref}}-\Big(\sqrt{\tfrac{V_{\max}}{\lambda_{\min}(Q)}} + \sqrt{\tfrac{V_{\max}}{\lambda_{\min}(T)}}\Big)>0 .
\]
Let $n = \max\{n_1, n_2\}$, and assume $\lambda_{\min}(\Lambda),\lambda_{\min}(\Gamma)>0$.  
Then there exists $\tilde c_e \geq 0$ such that, whenever
\[
d_{\mathrm{safe}}>\sqrt{p}\,\tilde c_e,
\]
then:
\begin{enumerate}
\item[\emph{(a)}] The \emph{executed} outputs over the next $n$ steps satisfy the hard constraints (\ref{eq:ocp-constraints}), i.e.
\[
\mathrm{dist}\big(y(k+i),\partial\mathcal{Y}\big)\  >\ 0,
\quad i=0,\ldots,n-1.
\]
\item[\emph{(b)}] At time $k{+}n$ there exists a candidate optimizer that satisfies all constraints of Problem~\ref{pr:main-problem}.
\end{enumerate}
\end{proposition}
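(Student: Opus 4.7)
The plan is to prove Proposition~\ref{prop:feasible-candidate} in two stages: first, I would establish that the \emph{executed} outputs remain strictly inside $\mathcal{Y}$ (part~(a)) by composing an energy-to-geometry argument with the prediction-error bound of Proposition~\ref{prop:prediction-error}; second, I would construct an explicit feasible candidate at time $k{+}n$ (part~(b)) by shifting the current optimizer and padding with the artificial equilibrium tail.

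For part~(a), the first step is to turn the cost bound $J_L^\star(k)\le V_{\max}$ into Euclidean bounds on the predicted outputs. The stage cost gives $\|y_i^\star(k)-y^s(k)\|_2 \le \sqrt{V_{\max}/\lambda_{\min}(Q)}$ and the offset cost gives $\|y^s(k)-y_{\mathrm{ref}}(k)\|_2 \le \sqrt{V_{\max}/\lambda_{\min}(T)}$. Using that $\mathrm{dist}(\cdot,\partial\mathcal{Y})$ in Definition~\ref{def:distance-to-set} is $1$-Lipschitz in $\|\cdot\|_2$, being the minimum of the $1$-Lipschitz facet functionals $\phi_j$, the triangle inequality yields $\mathrm{dist}(y_i^\star(k),\partial\mathcal{Y}) \ge d_{\mathrm{safe}}$ for every $i\in\{0,\ldots,L-1\}$. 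In parallel, the regularization terms $\|g(k)\|_\Lambda^2$ and $\|\boldsymbol{\sigma}(k)\|_\Gamma^2$ are each majorized by $V_{\max}$, and through standard $\ell_2$-to-$\ell_1$ and $\ell_2$-to-$\ell_\infty$ equivalences they translate into explicit bounds on $\|g^\star(k)\|_1$ and $\|\boldsymbol{\sigma}^\star(k)\|_\infty$. Plugging these into the right-hand side of (\ref{eq:error-bound}) defines the constant $\tilde c_e$. Applying Lemma~\ref{lem:geom-noV}(i) with $d=d_{\mathrm{safe}}$ and $\kappa=\tilde c_e$ then gives $\mathrm{dist}(y(k+i),\partial\mathcal{Y}) \ge d_{\mathrm{safe}} - \sqrt{p}\,\tilde c_e > 0$ for $i=0,\ldots,n-1$, which lies inside the horizon by Assumption~\ref{ass:prediction-horizon-length}.

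For part~(b), I would build the candidate $(\mathbf{u}^c,\mathbf{y}^c,\mathbf{x}_1^c,g^c,\boldsymbol{\sigma}^c,u^{s,c},y^{s,c},x_1^{s,c})$ at time $k{+}n$ by keeping the current artificial equilibrium, i.e.\ $(u^{s,c},y^{s,c},x_1^{s,c})=(u^s(k),y^s(k),x_1^s(k))$, shifting the $\Sigma_1$ state and input trajectories forward by $n$ steps, and padding the remaining $n$ samples with the equilibrium pair $(u^s(k),y^s(k))$. A compatible $g^c$ then exists by Willems' fundamental lemma applied to the shifted Hankel matrices, because the shifted-plus-padded sequence is a valid trajectory of $\Sigma_2$ under a persistently exciting input (Assumption~\ref{ass:subsystem-properties}(3)). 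Initialization (\ref{eq:ocp-init}) is satisfied because the last $n_2$ executed samples agree—up to a perturbation absorbed by $\boldsymbol{\sigma}^c$—with the equilibrium tail already enforced at time $k$ by (\ref{eq:ocp-terminal}); the terminal condition (\ref{eq:ocp-terminal}) holds by construction of the padding; and the hard constraints (\ref{eq:ocp-constraints}) hold because the shifted block inherits feasibility from $(\mathbf{u}^\star(k),\mathbf{y}^\star(k))$ and the padding block lies in $\mathcal{U}\times\mathcal{Y}$ by definition of the reachable equilibrium (\ref{eq:reachable-equilibrium}).

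The delicate step—where I expect the main technical effort—is the quantitative construction of $g^c$ and $\boldsymbol{\sigma}^c$ that is consistent with the \emph{noisy} reinitialization. At time $k{+}n$ the actual measurements $(u_2^\circ(k{+}n),y_2^\circ(k{+}n))$ differ from the predictions made at time $k$ both because of the prediction error (\ref{eq:error-bound}) and because of fresh online measurement noise, so $\boldsymbol{\sigma}^c$ must absorb a mismatch whose size depends on $\epsilon$ and on part~(a). Showing that $\|g^c\|_1$ and $\|\boldsymbol{\sigma}^c\|_\infty$ stay below the thresholds implicit in the definition of $\tilde c_e$, so that the hard output constraints continue to hold along the whole candidate trajectory, is essentially a bookkeeping exercise tracking how offline and online noise propagate through the shifted Hankel map.
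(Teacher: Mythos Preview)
Your argument for part~(a) is correct and mirrors the paper's proof exactly: bound the distance of the predicted outputs to $\partial\mathcal{Y}$ via the cost, bound the prediction error via Proposition~\ref{prop:prediction-error} with $\|g^\star\|_1$ and $\|\boldsymbol\sigma^\star\|_\infty$ controlled by the regularizers, and invoke Lemma~\ref{lem:geom-noV}.

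For part~(b), however, your shift-and-pad construction has two genuine gaps relative to what the paper does. First, you take the candidate $\mathbf{y}_2^c$ to be the shifted \emph{predicted} sequence $\mathbf{y}_2^\star$ padded with $y_2^s$, and justify the existence of a compatible $g^c$ by asserting this is ``a valid trajectory of $\Sigma_2$''. It is not: the prediction $\mathbf{y}_2^\star$ was built from noisy data with slack and does not coincide with the system response from the \emph{actual} initial condition at $k{+}n$; moreover, appending the constant pair $(u_2^s,y_2^s)$ does not drive the underlying state to $x_2^s$. The paper instead sets the candidate output equal to the actual open-loop response $\bar y_2$ of $\Sigma_2$ to $\mathbf{u}_2^\star$ from the true initialization, observes that the resulting internal state at the tail satisfies $\|\bar x_2 - x_2^{\mathrm{sr}}\|_2\le r$ (by Proposition~\ref{prop:prediction-error} combined with the terminal condition at time $k$), and then invokes \emph{controllability} of $\Sigma_2$ to construct a short bridge $(\tilde u_2,\tilde y_2)$ that returns the state exactly to $x_2^{\mathrm{sr}}$. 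This controllability bridge is the missing idea; without it you cannot obtain the bounds on $\|\widehat g\|_2$ and $\|\widehat{\boldsymbol\sigma}\|_\infty$ in terms of $\|\bar x_2 - x_2^{\mathrm{sr}}\|_2$ that the downstream results (Proposition~\ref{prop:recursive-feasibility} and Lemma~\ref{lem:optimal-cost}) rely on. Calling this ``bookkeeping'' understates it.

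Second, your claim that the candidate output constraints ``inherit feasibility'' from $(\mathbf{u}^\star,\mathbf{y}^\star)$ fails for the $\Sigma_1$ component: $\widehat{\mathbf{y}}_1$ is determined by rolling out the known model from the \emph{actual} $x_1(k{+}n)$, which differs from the predicted $x_{1,n}^\star(k)$ by the accumulated coupling error, so $\widehat{\mathbf{y}}_1$ is not the shifted $\mathbf{y}_1^\star$. The paper closes this gap by applying the bound of Proposition~\ref{prop:prediction-error} a second time---once for the candidate and once for the shifted optimal---using the triangle inequality to bound $\|\widehat y_i - \tilde y^\star_i\|_\infty$ by a new constant $\tilde c_e^{\mathrm{new}}$, and then re-invoking Lemma~\ref{lem:geom-noV} with $\kappa=\tilde c_e^{\mathrm{new}}$. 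That second geometric step is essential and is absent from your sketch.
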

\begin{proof}
Let $(g^\star,\mathbf u^\star,\mathbf y^\star,\mathbf x_1^\star,\boldsymbol\sigma^\star,u^{s\star},y^{s\star})$ be the optimizer at time $k$, with $J_L^\star(k)\le V_{\max}$. 
For every $i=0,\ldots,L-1$, positive definiteness of $Q,T$ and $J_L^\star(k)\le V_{\max}$ imply
\begin{align*}
\|y^\star_i(k)-y_{\mathrm{ref}}\|_2
&\le \|y^\star_i(k)-y^{s\star}(k)\|_2 + \|y^{s\star}(k)-y_{\mathrm{ref}}\|_2 \\
&\le \sqrt{\tfrac{V_{\max}}{\lambda_{\min}(Q)}} + \sqrt{\tfrac{V_{\max}}{\lambda_{\min}(T)}}.
\end{align*}
Since $\mathrm{dist}(\cdot,\partial\mathcal Y)$ is 1-Lipschitz and $d_{\mathrm{ref}}=\mathrm{dist}(y_{\mathrm{ref}},\partial\mathcal Y)$,
\[
\mathrm{dist}\!\left(y^\star_{i}(k),\partial\mathcal{Y}\right)\ \ge\ d_{\mathrm{safe}}.
\]
By Proposition~\ref{prop:prediction-error};
\begin{align}\label{eq:C1_prop}
\|y(k+i)-y^\star_{i}(k)\|_\infty
&\le (1+\tilde c_{\Sigma_1})\Big[\tilde c_{\Sigma_2}\big(\epsilon[\|g^\star\|_1 + 1 ] +\|{\sigma^\star}^\circ\|_\infty\big) \notag \\& 
+ \epsilon\|g^\star\|_1+\|\boldsymbol\sigma^\star\|_\infty\Big] \notag \\
&\leq (1+\tilde c_{\Sigma_1})\Big[\tilde c_{\Sigma_2}\Big(\epsilon\big(\sqrt{\tfrac{N V_{\max}}{\lambda_{\min}(\Lambda)}}+1\big)
\notag\\&+ \tfrac{\sqrt{V_{\max}}}{\sqrt{\lambda_{\min}(\Gamma)}}\Big) 
+ \epsilon\sqrt{\tfrac{N V_{\max}}{\lambda_{\min}(\Lambda)}} + \tfrac{\sqrt{V_{\max}}}{\sqrt{\lambda_{\min}(\Gamma)}}\Big] \notag \\
&:= \tilde c_e ,
\end{align}
where we used
\begin{align*}
\|g^\star\|_1&\le \sqrt{\tfrac{N V_{\max}}{\lambda_{\min}(\Lambda)}},\quad
\|\boldsymbol\sigma^\star\|_\infty\le \sqrt{\tfrac{V_{\max}}{\lambda_{\min}(\Gamma)}}.
\end{align*}
Applying Lemma~\ref{lem:geom-noV} with $d=d_{\mathrm{safe}}$ and $\kappa=\tilde c_e$ proves $(a)$.

\medskip\noindent
To show $(b)$, we now construct a candidate at time $k{+}n$, and show that it is feasible. 
Set the candidate equilibrium $\widehat u^s:=u^{s\star}(k)$, $\widehat y^s:=y^{s\star}(k)$. 

\smallskip
\noindent \emph{Subsystem $\Sigma_2$.} For time $[-n, L-2n-1]$ we follow a shift-and-append strategy.  Denote by $\bar{y}_2$ the trajectory resulting from the open-loop application of $\mathbf{u}^\star_2$ with consistent initialization $(u_2^\circ, \ y_2^\circ)_{[-n,-1]}$. For the first $L-2n$ steps, we set a candidate output $\widehat{y}_2 = \bar{y}_2$. Note that at the tail $n$ steps, and by Proposition \ref{prop:prediction-error} we have $\norm{\bar y(k+i) - y_{2,i}^\star(k)}_\infty \le \tilde c_e$, and introduce the $\Sigma_2$ internal state $\bar{x}_{2}$ consistent with $(\mathbf{u}_2^\star, \bar{y})$ in some minimal realization, and let $x_{2}^{\text{sr}}$ be the equilibrium state corresponding to $( {u^s_2}^\star, \ {y^s_2}^\star)$. For small noise, and since at the tail $\mathbf{y}_2^\star = y_2^s$ we have at time $L-n+1$ that $\norm{\bar{x}_2(k) - x_2^{\text{sr}}}_2 \le r$ for a small $r \geq 0$. By minimality (see Remark \ref{IOSS}), there exists an input-output trajectory $(\tilde u_2, \tilde y_2)$ such that the corresponding state $\bar{x}_2$ approaches its equilibrium $(x_2^{\text{sr}})$ and satisfying
\[
 \norm{\begin{pmatrix}\tilde u_{2, [0, L-1]} \\ \tilde y_{2, [0, L-1]} \end{pmatrix}}_2^2 \leq \tilde c_{x_2}\norm{\bar{x}_2(k) - x_2^{\text{sr}}}_2^2,
\]
for some $\tilde c_{x_2}> 0$. Form the candidate input-output $(\widehat{u}_2, \ \widehat{y}_2)$
which is a valid trajectory for $\Sigma_2$. Denote its corresponding internal state in some minimal realization $\widehat{x}_{2}(k)$ and set  
\[
\widehat g := (H_{ux}^d)^\# \begin{pmatrix} u_{2, [-n, 1]}^\circ \oplus \widehat{\mathbf u}_2 \\\widehat{x}_{2, [-n, 1]}\end{pmatrix},
\]
where $H_{ux}^d$ collects input Hankel blocks and corresponding state (see \cite[eq~7]{berberich2021tac}). Then the candidate slack satisfy;
\[
\widehat{\mathbf y}_2 = H_{L+n_2}(y_2^d)\,\widehat g - \widehat{\boldsymbol\sigma},
\]
\smallskip
\noindent \emph{Subsystem $\Sigma_1$.} Define the shifted–appended input
\[
\widehat{\mathbf u}_1 := \mathbf u^\star_{1,[n:L-1]}(k)\ \oplus\ (\mathbf 1_{n}\otimes \widehat u^s_1).
\]
Initialize with the actual state $x_1(k{+}n_2)$ and roll out the known model to obtain $(\widehat{\mathbf x}_1,\widehat{\mathbf y}_1)$.

\noindent
By construction, the candidate satisfies Hankel equalities, initialization, terminal equalities, and input bounds for $\Sigma_c$ . What remains is output constraints. To that end, we establish some useful bounds.

\medskip\noindent
\emph{Bound on $\widehat g$.} Since $H_{ux}^d$ has full row rank, with $c_{\mathrm{pe}}:=\|(H_{ux}^d)^\#\|_2^2$ we obtain
\begin{align*}
\|\widehat g\|_2^2 \;&\le\; c_{\mathrm{pe}}\,\Big( \norm{\widehat{\mathbf{u}_2}}_2 + \norm{\widehat x_2}_2\Big) \\
&\leq  c_{\mathrm{pe}}\,\Big( \tilde c_{x_2} \norm{\bar x_2 - x_2^s}_2^2 + \norm{\tilde{A}_2}_2^2 \norm{\xi}_2^2 \Big) ,
\end{align*}
where $\tilde A_2$ is s.t.
\begin{align*}
    \begin{pmatrix}
        u_{2, [-n, 1]}^\circ  \\\widehat{x}_{2, [-n, 1]}
    \end{pmatrix} &= \underbrace{\begin{pmatrix}
        I & 0 \\ \star & \mathcal{O}^\#
    \end{pmatrix}}_{\tilde A_2} \underbrace{\begin{pmatrix}
         u_{2, [-n, 1]} \\  y_{2, [-n, 1]}
    \end{pmatrix}}_\xi,
\end{align*}

with $\phi x := \xi$ for some linear transformation $\phi$.

\smallskip
\noindent \emph{Bound on $\widehat\sigma$.} Write $H_{L+n_2}(y_2^d)=H_y^n+H_\delta$, where $H_\delta=H_{L+n_2}(\varepsilon^d)$. Then by definition of the slack;
\[
\|\widehat{\boldsymbol\sigma}_2\|_\infty \le c_\delta\|\widehat g\|_2 + \sqrt{n_2}\,\epsilon,
\]
with $c_\delta:=\|H_\delta\|_2$.

\medskip\noindent
Finally, compare the candidate output with the shifted optimal prediction $\tilde{\mathbf y}^\star:=\mathbf y^\star_{[n_2:L-1]}(k)\oplus(\mathbf 1_{n_2}\otimes y^{s\star})$. By applying Proposition~\ref{prop:prediction-error} to both sequences and using the triangle inequality,
\[
\max_{0\le i<L}\ \|\widehat y(k{+}n_2{+}i)-\tilde y^\star_i\|_\infty \le \tilde c_e^{(\star)}+\tilde c_e^{(\hat\cdot)}=:\tilde c_e^{\mathrm{new}},
\]
where
\[
\tilde c_e^{(\hat\cdot)}=(1+\tilde c_{\Sigma_1})\Big[\tilde c_{\Sigma_2}\big(\epsilon(\|\widehat g\|_1+1)+\|\widehat\sigma^\circ\|_\infty\big) + \epsilon\|\widehat g\|_1+\|\widehat{\boldsymbol\sigma}_2\|_\infty\Big].
\]
Since $\tilde y^\star$ satisfies $\mathrm{dist}(\tilde y^\star_i,\partial\mathcal Y)\ge d_{\mathrm{safe}}$, Lemma~\ref{lem:geom-noV} with $\kappa=\tilde c_e^{\mathrm{new}}$ gives
\[
\mathrm{dist}\!\left(\widehat y(k{+}n_2{+}i),\partial\mathcal{Y}\right)\ \ge\ d_{\mathrm{safe}}-\sqrt{p}\,\tilde c_e^{\mathrm{new}}>0,\quad i=0,\ldots,L-1.
\]
Thus the candidate output constraints at time $k{+}n_2$ are also satisfied. This proves $(b)$. 
\end{proof}

Proposition~\ref{prop:feasible-candidate} is very restrictive in the sense that it limits the class of references for which one can guarantee the existence of a feasible candidate to those that are well-inside the output constraints set by a \textit{safety margin}. The benefit of not employing constraint tightening is that the MPC remains a convex quadratic program that can be solved efficiently.


Proposition \ref{prop:feasible-candidate} alone is not enough to establish recursive feasibility. Instead, it only establishes the existence of a feasible candidate for which the output constraints are also respected by the actual output of the plant after $n$ steps. It does not ensure that no other candidate, for which the optimal predicted output is feasible but the actual measured one is not, is not chosen.
The following claim addresses this point.

\begin{proposition}[No cheaper un-safe alternative]
\label{prop:recursive-feasibility}
In addition to the hypotheses of Proposition~\ref{prop:feasible-candidate}, suppose the cost-dominance inequality
\begin{align} 
\label{eq:dominance_prop}
\lambda_{\min}(Q)&\big(d_{\mathrm{safe}}-\sqrt{p}\,\tilde c_e\big)^2
\;\ge\;
\gamma_H^2(\beta+\bar \xi)^2 \;+\; V_{\max}
\; \notag \\&+\; \lambda_{\max}(\Gamma)\Big( A_\beta(\beta+\bar \xi) + A_V\sqrt{V_{\max}} \Big)^2,
\end{align}
holds, where
\begin{align*}
\beta &:= \sqrt{\frac{V_{\text{max}}}{\lambda_{\min}(R)}}, \ \gamma_H:=\sqrt{\lambda_{\max}(\Lambda)c_{\mathrm{pe}}}, \ \alpha_H := \norm{H_L(\mathbf{y}_2^d)}_2\\
 A_\beta &:= \frac{\gamma_H}{\sqrt{\lambda_{\min}(\Lambda)}}\Big( c_\delta\epsilon + \sqrt{L}\,\tilde c_{\Sigma_2}\,\alpha_H\sqrt{N}\Big),
\\
A_V &:= \sqrt{L}\,\tilde c_{\Sigma_2}\Big( \frac{\alpha_H\sqrt{N}}{\sqrt{\lambda_{\min}(\Lambda)}} + \frac{\epsilon\sqrt{N}}{\sqrt{\lambda_{\min}(\Lambda)}} \Big)
+ \frac{\sqrt{L}\,\tilde c_{\Sigma_2}}{\sqrt{\lambda_{\min}(\Gamma)}} .
\end{align*}
Then the optimizer at time $k{+}n_2$ selects a trajectory whose predicted outputs lie in $\mathcal Y$, and Problem~\ref{pr:main-problem} is $n_2$-step recursively feasible for all $z(0)\in\mathcal{Z}_f$.
\end{proposition}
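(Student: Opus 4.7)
I would proceed by contradiction against optimality. Assume that at time $k{+}n_2$ the optimizer returns a trajectory with some predicted output $\bar y^\star_i$ so close to the boundary that the executed output at $k{+}n_2{+}i$ cannot be guaranteed feasible, i.e.\ $\mathrm{dist}(\bar y^\star_i,\partial\mathcal Y)<\sqrt{p}\,\tilde c_e^{\mathrm{new}}$ for some $i\in\{0,\dots,L-1\}$. The strategy is to exhibit, via \eqref{eq:dominance_prop}, that such a trajectory has strictly higher cost than the shift-and-append candidate built in Proposition~\ref{prop:feasible-candidate}, contradicting its optimality. Once unsafe optimizers are ruled out, Lemma~\ref{lem:geom-noV} applied with $\kappa=\tilde c_e^{\mathrm{new}}$ forces the executed outputs over the next $n_2$ steps to remain in $\mathcal Y$; the resulting $z(k{+}n_2)$ lies again in $\mathcal Z_f$, and induction delivers $n_2$-step recursive feasibility for all $z(0)\in\mathcal Z_f$.

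The first technical step is an \emph{upper bound} on $J_L^\star(k{+}n_2)$ obtained by evaluating the cost on the candidate of Proposition~\ref{prop:feasible-candidate}. The terms split naturally: (i) the reference-tracking block $\|\widehat u^s-u_{\mathrm{ref}}\|_S^2+\|\widehat y^s-y_{\mathrm{ref}}\|_T^2$ is bounded by $V_{\max}$ because the artificial equilibrium is inherited from step $k$; (ii) the stage-cost block is controlled by the reachability constant $\tilde c_{x_2}$ and $\|\bar x_2-x_2^{\mathrm{sr}}\|_2$, which feeds into $\|\widehat g\|_\Lambda^2\le \gamma_H^2(\beta+\bar\xi)^2$ via $c_{\mathrm{pe}}=\|(H_{ux}^d)^\#\|_2^2$; (iii) the slack block $\|\widehat{\boldsymbol\sigma}\|_\Gamma^2$ is handled by the Hankel decomposition $H_{L+n_2}(\mathbf y_2^d)=H_y^n+H_\delta$ with $\|H_\delta\|_2\le c_\delta$, producing the $\lambda_{\max}(\Gamma)\bigl(A_\beta(\beta+\bar\xi)+A_V\sqrt{V_{\max}}\bigr)^2$ term. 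Adding the three pieces reproduces exactly the RHS of \eqref{eq:dominance_prop}.

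The second step is a \emph{lower bound} on the cost of any hypothetical unsafe optimizer. By the 1-Lipschitz property of $\mathrm{dist}(\cdot,\partial\mathcal Y)$ together with $\mathrm{dist}(y_{\mathrm{ref}},\partial\mathcal Y)=d_{\mathrm{ref}}\ge d_{\mathrm{safe}}$, an unsafe $\bar y^\star_i$ satisfies $\|\bar y^\star_i-y_{\mathrm{ref}}\|_2 \ge d_{\mathrm{ref}}-\sqrt{p}\,\tilde c_e^{\mathrm{new}}\ge d_{\mathrm{safe}}-\sqrt{p}\,\tilde c_e$. Splitting through the artificial equilibrium, $\|\bar y^\star_i-y_{\mathrm{ref}}\|_2\le\|\bar y^\star_i-\bar y^{s\star}\|_2+\|\bar y^{s\star}-y_{\mathrm{ref}}\|_2$, and combining the elementary inequality $a^2+b^2\ge\tfrac12(a+b)^2$ with the $Q,T$ weighting yields that the sum of the stage and reference-tracking contributions in the cost is at least $\lambda_{\min}(Q)(d_{\mathrm{safe}}-\sqrt{p}\,\tilde c_e)^2$. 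Inequality \eqref{eq:dominance_prop} then reads precisely as ``unsafe lower bound $>$ candidate upper bound'', giving the desired contradiction and completing the argument.

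\textbf{The main obstacle} I expect lies in the accounting of step two: the candidate of Proposition~\ref{prop:feasible-candidate} couples $\|\widehat g\|$, $\|\widehat{\boldsymbol\sigma}\|$, and the minimality-based reachability trajectory $(\tilde u_2,\tilde y_2)$ in a non-trivial way, and the switching between $\ell_1$, $\ell_2$ and $\ell_\infty$ norms introduces horizon-dependent factors ($\sqrt{L}$, $\sqrt{N}$, $\sqrt{n_2}$) that must be absorbed exactly into the constants $A_\beta,A_V,\gamma_H$ as written in the statement; any slack in these bounds would require re-scaling the dominance condition and would weaken the guarantee.
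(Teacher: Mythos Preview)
Your proposal follows the same overall architecture as the paper: upper-bound the cost of the shift-and-append candidate from Proposition~\ref{prop:feasible-candidate} (your step one reproduces the paper's bounds on $\ell_u(\widehat u)\le V_{\max}$, $\ell_g(\widehat g)\le\gamma_H^2(\beta+\bar\xi)^2$, and $\ell_\sigma(\widehat\sigma)$ essentially line by line), lower-bound the cost of any unsafe alternative, and let \eqref{eq:dominance_prop} force the contradiction.

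The one genuine tactical difference is the pivot in your step two. The paper does \emph{not} route through $y_{\mathrm{ref}}$ and the triangle split $\|\bar y^\star_i-\bar y^{s\star}\|+\|\bar y^{s\star}-y_{\mathrm{ref}}\|$; instead it invokes Lemma~\ref{lem:geom-noV}\,(ii) directly, comparing a violating $\tilde y\notin\mathcal Y$ against the candidate output $\widehat y$ to obtain $\|\tilde y-\widehat y\|_2\ge\Delta:=d_{\mathrm{safe}}-\sqrt{p}\,\tilde c_e$, and then bounds the \emph{cost difference} $\ell_y(\tilde y)-\ell_y(\widehat y)\ge\lambda_{\min}(Q)\Delta^2$. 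This lets the paper write $J(\tilde z)-J(\widehat z)\ge\lambda_{\min}(Q)\Delta^2-\ell_g(\widehat g)-\ell_u(\widehat u)-\ell_\sigma(\widehat\sigma)$ after dropping the nonnegative non-output terms of $\tilde z$. Your absolute-cost route is equally legitimate, but the inequality $a^2+b^2\ge\tfrac12(a+b)^2$ only delivers $\tfrac12\min\{\lambda_{\min}(Q),\lambda_{\min}(T)\}\Delta^2$, not $\lambda_{\min}(Q)\Delta^2$; so your version would require the left-hand side of \eqref{eq:dominance_prop} to carry that weaker constant, which is precisely the ``slack in these bounds'' you flagged in the obstacle paragraph. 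The paper's difference-based pivot avoids that loss (at the price of tacitly using that $\widehat y$ sits near its own artificial equilibrium), and is why the stated dominance inequality has $\lambda_{\min}(Q)$ with no prefactor.
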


\begin{proof}
Let $\widehat z$ be the feasible candidate from Proposition~\ref{prop:feasible-candidate}. Define the safety margin
\[
\Delta \ :=\ d_{\mathrm{safe}}-\sqrt{p}\,\tilde c_e\ > 0.
\]
By Lemma~\ref{lem:geom-noV} with $d=d_{\mathrm{safe}}$ and $\kappa=C_1$, any violating output $\tilde y\notin\mathcal Y$ must satisfy
\[
\|\tilde y-\widehat y(k{+}n_2{+}i)\|_2\ \ge\ \Delta,
\]
hence, for $T$ positive definite, 
\[
\ell_y(\tilde y)-\ell_y\!\big(\widehat y(k{+}n_2{+}i)\big)\ \ge\ \lambda_{\min}(Q)\,\Delta^2.
\]
Summing over the horizon yields a net \emph{output-cost} penalty $\ge \lambda_{\min}(Q)\Delta^2$ for any unsafe feasible point relative to the candidate.
From the construction in Proposition~\ref{prop:feasible-candidate}
\[
\ell_u(\widehat u)\le V_{\max},\qquad
\ell_g(\widehat g)=\|\widehat g\|_\Lambda^2 \le \gamma_H^2(\beta+\bar \xi)^2.
\]
Moreover, writing $\widehat\sigma=H_\delta\widehat g+(H_y^n\widehat g-\widehat y_2)$ and using
$\|H_\delta\widehat g\|_2\le c_\delta\epsilon\,\|\widehat g\|_2
\le \frac{c_\delta\epsilon\,\gamma_H(\beta+\bar\xi)}{\sqrt{\lambda_{\min}(\Lambda)}}$ together with
\begin{align*}
\|H_y^n\widehat g-\widehat y_2\|_2 \ &\le\ \sqrt{L}\,\tilde c_{\Sigma_2}\,\Big(\alpha_H\sqrt{N}\frac{\|\widehat g-g^\star\|_\Lambda}{\sqrt{\lambda_{\min}(\Lambda)}} \\&+ \epsilon\sqrt{N}\frac{\|g^\star\|_\Lambda}{\sqrt{\lambda_{\min}(\Lambda)}} + \frac{\|\boldsymbol\sigma^\star\|_2}{\sqrt{\lambda_{\min}(\Gamma)}}\Big),
\end{align*}
and the bounds $\|\widehat g\|_\Lambda \le \gamma_H(\beta+\bar\xi)$, $\|g^\star\|_\Lambda\le \sqrt{V_{\max}}$, $\|\boldsymbol\sigma^\star\|_2\le \sqrt{V_{\max}/\lambda_{\min}(\Gamma)}$, we obtain
\begin{align*}
\|\widehat\sigma\|_2 \ &\le\ A_\beta(\beta+\bar \xi)+A_V\sqrt{V_{\max}} \Rightarrow
\\ 
\ell_\sigma(\widehat\sigma)&=\|\widehat\sigma\|_\Gamma^2\ \le\ \lambda_{\max}(\Gamma)\Big( A_\beta(\beta+\bar \xi) + A_V\sqrt{V_{\max}} \Big)^2.
\end{align*}

\noindent For any unsafe feasible trajectory $\tilde z$ at time $k{+}n_2$,
\begin{align*}
J(\tilde z)-J(\widehat z)
&\ge \underbrace{\lambda_{\min}(Q)\Delta^2}_{\text{output penalty}}
\ -\ \underbrace{\gamma_H^2(\beta+\bar \xi)^2}_{\ell_g(\widehat g)}\ -\ \underbrace{V_{\max}}_{\ell_u(\widehat u)} \\ &
\ -\ \underbrace{\lambda_{\max}(\Gamma)\Big( A_\beta(\beta+\bar \xi) + A_V\sqrt{V_{\max}} \Big)^2}_{\ell_\sigma(\widehat\sigma)}.
\end{align*}
If \eqref{eq:dominance_prop} holds, the right-hand side is non-negative, hence the optimizer at time $k{+}n_2$ selects a safe trajectory. 
\end{proof}

\subsection{Practical Exponential Stability}
\label{subsec:stability}
We establish a Lyapunov candidate relying on the optimal cost function and an IOSS property of the composite system. First, we establish some quadratic bounds on the optimal cost.

\begin{lemma}[optimal cost properties]\label{lem:optimal-cost}
Let Assumptions \ref{ass:subsystem-properties}--\ref{ass:prediction-horizon-length} hold, and define the offset value
$
\widetilde J^\star(\xi)\;:=\;J_L^\star(\xi)\;-\;J_s^\star(u_{\rm ref},y_{\rm ref}),
$
where $J_L^\star$ is the optimal value of \eqref{eq:ocp-cost} and $J_s^\star$ is the value of \eqref{eq:reachable-equilibrium}, and $\xi = (x_1^\top,\ u_{2, [-n_2,-1]}^\top,\ y^\top_{2, [-n_2,-1]})^\top$ is the extended internal state equivalent to $\Sigma_c$ as in Proposition~\ref{prop:feasible-candidate}. Denote the deviation from the equilibrium as $\Delta \xi = \xi - \xi^s$. Then there exist constants $\underline c,\overline c,\overline c_\epsilon>0$ and a radius $r>0$ such that for all $\norm{\Delta \xi}_2\le r$,
\begin{equation}\label{eq:Jbounds_tracking}
\underline c\,\|\xi-\xi^s\|_2^2 
\;\le\; \widetilde J^\star(\xi)
\;\le\; \overline c\,\|\xi-\xi^s\|_2^2 \;+\; \overline c_\epsilon\,\epsilon^2 .
\end{equation}
\end{lemma}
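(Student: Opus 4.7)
\emph{Upper bound (constructive).} I would build a feasible primal at the extended state $\xi$ that, exploiting controllability of $\Sigma_c$ (Remark~\ref{IOSS}), steers $(z,u,y)$ from $\xi$ to the optimal reachable equilibrium $(\xi^s,u^s,y^s)$ in at most $n$ steps and then holds the equilibrium for the remainder of the horizon. Controllability guarantees the steering segment costs $\mathcal{O}(\|\Delta\xi\|_2^2)$ in the stage term. The Hankel coefficient $\widehat g$ is selected through the input/state pseudo-inverse, analogously to the construction in \cite{berberich2021tac} and in the proof of Proposition~\ref{prop:feasible-candidate}, and split as $\widehat g=\widehat g^s+\widehat g^\Delta$, where $\widehat g^s$ realises the pure equilibrium repetition (contributing the $J_s^\star+\mathcal{O}(\epsilon^2)$ piece) and $\widehat g^\Delta$ encodes the transient (contributing $\mathcal{O}(\|\Delta\xi\|_2^2)$). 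The candidate slack obeys $\|\widehat\sigma\|_2\le c_\delta\epsilon\|\widehat g\|_2+\sqrt{n_2}\,\epsilon$, so $\|\widehat\sigma\|_\Gamma^2=\mathcal{O}(\epsilon^2+\epsilon^2\|\Delta\xi\|_2^2)$. Choosing the candidate artificial equilibrium equal to the optimal reachable one makes the terminal penalty $\|y^s-y_{\rm ref}\|_T^2+\|u^s-u_{\rm ref}\|_S^2$ exactly cancel the subtracted $J_s^\star$, leaving only the $\|\Delta\xi\|_2^2$ and $\epsilon^2$ contributions; the radius $r$ is taken small enough that the candidate respects \eqref{eq:ocp-constraints}.

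\emph{Lower bound (via IOSS).} Let $(\mathbf u^\star,\mathbf y^\star,u^{s\star},y^{s\star})$ be any optimizer. Positive definiteness of $Q,R$, together with the minimality of $J_s^\star$ over feasible equilibria, gives
\[
\widetilde J^\star(\xi)\;\ge\;\lambda_{\min}(Q)\sum_{i=0}^{L-1}\|y^\star(i)-y^{s\star}\|_2^2+\lambda_{\min}(R)\sum_{i=0}^{L-1}\|u^\star(i)-u^{s\star}\|_2^2.
\]
Applying the IOSS-Lyapunov inequality from Remark~\ref{IOSS} to the system shifted by $(z^s,u^{s\star},y^{s\star})$, which preserves the realisation $(A_c,B_c,C_c)$, and telescoping across $i=0,\dots,L-1$ together with the terminal equality in \eqref{eq:ocp-terminal} that pins $\Delta z(L)$ to the artificial equilibrium, yields
\[
\tfrac12\|z(0)-z^s\|_2^2\;\le\;c_o\sum_{i=0}^{L-1}\bigl(\|u^\star(i)-u^{s\star}\|_2^2+\|y^\star(i)-y^{s\star}\|_2^2\bigr)+W(\Delta z(L)).
\]
The last term is controlled by the artificial-equilibrium slack already absorbed in $J_s^\star$ through the offset subtraction. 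Using the equivalence between the extended state $\xi$ and a minimal representation of $z$ (up to an invertible linear map, via the Hankel observability block used in Proposition~\ref{prop:feasible-candidate}) then converts $\|z(0)-z^s\|_2^2$ into a constant multiple of $\|\xi-\xi^s\|_2^2$, giving $\underline c\,\|\Delta\xi\|_2^2\le\widetilde J^\star(\xi)$.

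\emph{Main obstacle.} The delicate bookkeeping concerns the offset subtraction $-J_s^\star(u_{\rm ref},y_{\rm ref})$. On the upper bound side, identifying the candidate artificial equilibrium with the optimal reachable one is what cancels $J_s^\star$ exactly; any mismatch would leave a non-vanishing positive residual that is \emph{not} $\mathcal{O}(\|\Delta\xi\|_2^2+\epsilon^2)$. On the lower bound side, $\Delta z(L)$ is not literally zero but equals the optimal artificial-equilibrium deviation, so the IOSS dissipation inequality plus the local neighbourhood $\|\Delta\xi\|_2\le r$ are both needed to absorb the endpoint term into the single constant $\underline c$. Finally, the noisy initialisation $(u_2,y_2)_{[-n_2,-1]}$ embedded in $\xi$ is precisely what injects the $\overline c_\epsilon\epsilon^2$ term into the upper bound, via both the slack cost and the noise-perturbed Hankel selector $\widehat g^s$.
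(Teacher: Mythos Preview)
Your upper-bound construction is essentially the paper's: a controllability-based bridge to the optimal reachable equilibrium, then bound $\widehat g$ via the pseudo-inverse of the input/state Hankel block and $\widehat\sigma$ via the noise-Hankel split. The decomposition $\widehat g=\widehat g^s+\widehat g^\Delta$ is an extra layer the paper does not need (it bounds $\|\widehat g\|_2^2\le c_g\|\Delta\xi\|_2^2$ directly in shifted coordinates), but the logic is the same.

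The lower bound has a gap. The IOSS dissipation inequality in Remark~\ref{IOSS} reads
\[
W(\Delta z(k{+}1))-W(\Delta z(k))\le -\tfrac12\|\Delta z(k)\|_2^2+c_1\|\Delta u(k)\|_2^2+c_2\|\Delta y(k)\|_2^2,
\]
and telescoping over $i=0,\dots,L-1$ gives
\[
\tfrac12\sum_{i=0}^{L-1}\|\Delta z(i)\|_2^2\;\le\;W(\Delta z(0))-W(\Delta z(L))+c_1\!\sum\|\Delta u\|_2^2+c_2\!\sum\|\Delta y\|_2^2 .
\]
This is an \emph{upper} bound on the state-energy sum that still contains $W(\Delta z(0))$ on the right; it does \emph{not} rearrange into your claimed inequality $\tfrac12\|z(0)-z^s\|_2^2\le c_o\sum(\|\Delta u\|^2+\|\Delta y\|^2)+W(\Delta z(L))$. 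In other words, IOSS telescoping bounds the wrong side: you need an upper bound on $\|\Delta z(0)\|$ in terms of input/output energy, and the dissipation inequality alone does not deliver that without already knowing $W(\Delta z(0))$. (There is also a mismatch in your shift: you centre the state at $z^s$ but the stage sums at $u^{s\star},y^{s\star}$, so the shifted dynamics are not homogeneous and the IOSS inequality does not apply as written.)

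The paper sidesteps all of this by invoking the \emph{finite-horizon observability inequality} directly: observability of $\Sigma_c$ (Remark~\ref{IOSS}) with $L\ge 2n$ gives a constant $\alpha_o>0$ such that
\[
\sum_{i=0}^{n_c-1}\|y(i)-y^s\|_2^2\;\ge\;\alpha_o\,\|z(0)-z^s\|_2^2,
\]
which needs no terminal condition and no telescoping. Dropping nonnegative terms from $\widetilde J^\star$ and using the linear equivalence between $z(0)-z^s$ and $\xi-\xi^s$ then yields $\underline c\,\|\Delta\xi\|_2^2$ immediately. Replace your IOSS step by this observability inequality and the lower bound goes through; the endpoint term $W(\Delta z(L))$ you were worried about simply does not appear.
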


\begin{proof}
Let $(u^s,y^s)$ be the artificial equilibrium solving \eqref{eq:reachable-equilibrium} and let $z^s$ be the associated steady state of the composite minimal realization of $\Sigma_c$. By Remark~\ref{IOSS}  and $L\ge 2n$, there exists $\alpha_\circ>0$ such that the finite-horizon observability inequality holds:
\[
\sum_{i=0}^{n_c-1}\!\|y(i)-y^s\|_2^2 \;\ge\; \alpha_o \|z(0)-z^s\|_2^2 .
\]
Since $Q\succ0$ and $z(0)-z^s$ depends linearly on $\xi-\xi^s$ (as in the Proof of Proposition \ref{prop:feasible-candidate}, there exists $\beta>0$ with
$\|z(0)-z^s\|_2 \ge \beta \|\xi-\xi^s\|_2$. Dropping nonnegative terms in the stage cost (\ref{eq:ocp-cost})
\[
\widetilde J^\star(\xi) \;\ge\; \lambda_{\min}(Q)\,\alpha_o\,\beta^2\,\|\xi-\xi^s\|_2^2 .
\]
Set $\underline c:=\lambda_{\min}(Q)\alpha_o\beta^2$ we get the lower bound in (\ref{eq:Jbounds_tracking}).
By controllability of $\Sigma_c$ and the bridge construction, the stage sum over the first $n$ steps is bounded as
\[
\sum_{i=0}^{L-1}\!\big(\|y(i)-y^s\|_Q^2+\|u(i)-u^s\|_R^2\big)\;\le\; c_{\rm st}\,\|\Delta\xi\|_2^2 .
\]
Using the feasible candidate of Proposition~\ref{prop:feasible-candidate} (Proof of $(b)$), and note
\[
\|\hat g\|_2^2 \;\le\; c_{\rm pe}\big(\|\hat u_2\|_2^2+\|\hat x_2\|_2^2\big)\;\le\; c_g\,\|\Delta\xi\|_2^2 .
\]
Similarly, writing $H_{L+n_2}(y_2^d)=H_{L+n_2}(y_2^n)+H_{L+n_2}(\delta)$ and using $\|H_{L+n_2}(\delta)\|_2\le c_\delta\,\epsilon$, the slack satisfies
\begin{align*}
\|\hat\sigma\|_2^2 \;&\le\; 2\,\|H_{L+n_2}(\delta)\|_2^2\,\|\hat g\|_2^2 + 2\,n_2\,\epsilon^2 \\ &
\;\le\; 2\,c_\delta^2 c_g\,\epsilon^2\|\Delta\xi\|_2^2 + 2\,n_2\,\epsilon^2 .
\end{align*}
Evaluating \eqref{eq:ocp-cost} at the candidate and subtracting $J_s^\star$ gives
\begin{align*}
\widetilde J^\star(\xi)\;&\le\; 
c_{\rm st}\,\|\Delta\xi\|_2^2 + \lambda_{\max}(\Lambda)\,c_g\,\|\Delta\xi\|_2^2 \\ &
+ \lambda_{\max}(\Gamma)\big(2c_\delta^2c_g\,\epsilon^2\|\Delta\xi\|_2^2 + 2n_2\,\epsilon^2\big).
\end{align*}
Hence \eqref{eq:Jbounds_tracking} holds with
$\overline c:=c_{\rm st}+\lambda_{\max}(\Lambda)c_g+2\lambda_{\max}(\Gamma)c_\delta^2c_g$ and $\overline c_\epsilon:=2\lambda_{\max}(\Gamma)n_2$. 
\end{proof}

\smallskip
Note that the upper bound in Lemma \ref{lem:optimal-cost} is precisely $V_{\max}$ in the statement of Proposition \ref{prop:feasible-candidate}. The above lemma, together with Remark \ref{IOSS}, allows us to construct a candidate Lyapunov function and state the following; omitting the proof for space since it follows similar lines to \cite[Th.~7]{Berberich2020Tracking}.
\vspace{-4pt}
\begin{proposition}[Practical exponential stability]
\label{th:practical_stability}
Suppose the hypotheses of Propositions ~\ref{prop:prediction-error}-\ref{prop:feasible-candidate}, and Lemma~\ref{lem:optimal-cost} are satisfied. Define, for a design parameter $\gamma>0$, the Lyapunov candidate $V(\xi)\;:=\;J^\star_L(\xi)+\gamma\,W(\xi-\xi_s)-J^\star_s$.
Then there exist constants $\underline\alpha,\overline\alpha>0$, $k_V\in(0,1)$, and $k_\epsilon\ge0$ and a radius $r>0$ such that for all $\xi$ with $\|\Delta\xi\|_2\le r$ and $n = \max\{n_1, n_2\}$:
\begin{enumerate}
  \item[\emph{(a)}]   $
    \underline\alpha\|\Delta\xi\|_2^2 - c_\varepsilon\varepsilon^2 \;\le\; V(\xi) \;\le\; \overline\alpha\|\Delta\xi\|_2^2 + c_\epsilon\epsilon^2 ,
  $
  \item[\emph{(b)}]  
    $V(\xi(t+n)) - V(\xi(t)) \;\le\; -\,k_V\,V(\xi(t)) + k_\epsilon\,\varepsilon^2 .
  $
\end{enumerate}
\end{proposition}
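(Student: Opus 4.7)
The plan is to follow the template of Berberich et al.\ \cite{Berberich2020Tracking}, adapted to the hybrid setting: the bounds in (a) come almost directly from Lemma~\ref{lem:optimal-cost} combined with the quadratic form of the IOSS Lyapunov function $W$ from Remark~\ref{IOSS}, while the descent in (b) is obtained by using the feasible candidate constructed in Proposition~\ref{prop:feasible-candidate} as a sub-optimal comparison trajectory, together with the IOSS inequality to convert output/input deviations into state deviations.

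For part (a), I would simply note that Lemma~\ref{lem:optimal-cost} already gives the two-sided quadratic bound on $\widetilde J^\star(\xi)=J_L^\star(\xi)-J_s^\star$ up to an $\epsilon^2$ slack term, and that $W(\xi-\xi_s)=(\xi-\xi_s)^\top P(\xi-\xi_s)$ with $P\succ 0$ satisfies $\lambda_{\min}(P)\|\Delta\xi\|_2^2 \le W(\Delta\xi) \le \lambda_{\max}(P)\|\Delta\xi\|_2^2$. Setting $\underline\alpha:=\underline c+\gamma\lambda_{\min}(P)$ and $\overline\alpha:=\overline c+\gamma\lambda_{\max}(P)$ yields (a) with $c_\epsilon$ coming from $\overline c_\epsilon$.

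For the descent inequality (b), I would proceed in three substeps. First, I evaluate the cost at the feasible candidate of Proposition~\ref{prop:feasible-candidate} at time $t+n$: by construction, this candidate drops the first $n$ stage terms of the optimal solution at time $t$ and appends an $n$-step bridge to the artificial equilibrium whose cost is bounded by $c_{\mathrm{cand}}\|\Delta\xi\|_2^2+c_{\mathrm{cand},\epsilon}\epsilon^2$, exactly as in the proof of Lemma~\ref{lem:optimal-cost}. This yields a bound of the form
\[
J_L^\star(\xi(t+n))-J_L^\star(\xi(t))\;\le\;-\sum_{i=0}^{n-1}\ell(u^\star_i,y^\star_i)\;+\;c_{\mathrm{cand}}\|\Delta\xi(t)\|_2^2\;+\;c_{\mathrm{cand},\epsilon}\epsilon^2,
\]
where $\ell$ denotes the stage cost referred to the artificial equilibrium. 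Second, I invoke the IOSS inequality from Remark~\ref{IOSS} applied to the shifted dynamics $z(k)-z^s$, summed over $n$ steps, to obtain
\[
W(\xi(t+n)-\xi^s)-W(\xi(t)-\xi^s)\;\le\;-\tfrac{1}{2}\sum_{i=0}^{n-1}\|z(t+i)-z^s\|_2^2+c_1\sum\|\Delta u\|_2^2+c_2\sum\|\Delta y\|_2^2,
\]
and then multiply by the design weight $\gamma>0$.

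The main obstacle, and the place where $\gamma$ must be chosen carefully, is to combine the two inequalities so that the positive $c_{\mathrm{cand}}\|\Delta\xi\|_2^2$ term in the cost decrease is dominated, while the $\gamma(c_1\|\Delta u\|_2^2+c_2\|\Delta y\|_2^2)$ term from $W$ is in turn absorbed into the negative stage-cost sum $-\sum\ell$. Choosing $\gamma$ small enough that $\gamma c_1 \le \tfrac{1}{2}\lambda_{\min}(R)$ and $\gamma c_2 \le \tfrac{1}{2}\lambda_{\min}(Q)$ makes the IOSS input/output penalties absorbed by half of the stage cost, leaving the other half plus $\tfrac{\gamma}{2}\sum\|z-z^s\|_2^2$ as a strictly negative term lower-bounded by $\kappa\|\Delta\xi\|_2^2$ for some $\kappa>0$ (using that $\xi$ is a linear function of $z$ and $(u_2,y_2)_{[-n,-1]}$ and both are bounded in terms of stage cost plus an $\epsilon^2$ term via Proposition~\ref{prop:prediction-error}). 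Finally, using the upper bound in (a) to replace $\kappa\|\Delta\xi\|_2^2$ by $(\kappa/\overline\alpha)(V(\xi)+c_\epsilon\epsilon^2)$, I obtain $V(\xi(t+n))-V(\xi(t))\le -k_V V(\xi(t))+k_\epsilon\epsilon^2$ with $k_V:=\kappa/\overline\alpha\in(0,1)$ and $k_\epsilon$ collecting all the $\epsilon^2$ constants. The local radius $r$ is inherited from Lemma~\ref{lem:optimal-cost} and the requirement that the candidate of Proposition~\ref{prop:feasible-candidate} remain feasible, which I expect to be the most delicate bookkeeping step.
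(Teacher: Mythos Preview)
Your proposal is correct and matches the paper's approach: the paper omits the proof entirely, stating only that it ``follows similar lines to \cite[Th.~7]{Berberich2020Tracking},'' and your outline is precisely that template---quadratic bounds on $V$ from Lemma~\ref{lem:optimal-cost} plus the IOSS function, and the $n$-step decrease via the feasible candidate of Proposition~\ref{prop:feasible-candidate} combined with the IOSS inequality, with $\gamma$ tuned so the input/output penalties are absorbed by the stage cost. The only minor point to watch is that the artificial equilibrium is itself an optimization variable, so when bounding the candidate cost at $t+n$ you keep $\widehat u^s=u^{s\star}(t)$ fixed (as in Proposition~\ref{prop:feasible-candidate}) and handle the $\|u^s-u_{\mathrm{ref}}\|_S^2+\|y^s-y_{\mathrm{ref}}\|_T^2$ terms via the offset $J_s^\star$, exactly as in Berberich et al.
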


\section{Case Study: the iRonCub robot}
\label{sec:validation}

We validated the proposed DFMPC on iRonCub, a jet-powered humanoid robot built on the iCub3 platform~\cite{dafarra2024icub3}. The robot uses four jet turbines for aerial maneuvers. 
\begin{figure}[t]
  \centering
  \includegraphics[width=0.95\columnwidth,trim=2mm 1mm 2mm 1mm,clip]{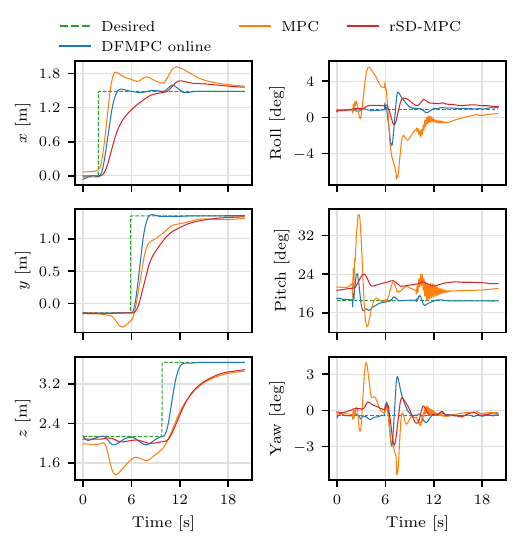}
  \caption{Trajectory tracking: CoM components $(x,y,z)$ on the left column; base orientation (roll, pitch, yaw) on the right column.}
  \label{fig:ironcub_tracking}
\end{figure}
\vspace{-6pt}
\subsection{Implementation Details}
\vspace{-4pt}
We model the robot by decomposing it into two subsystems. The known subsystem $\Sigma_{1}$ is represented by the momentum dynamics of the robot, described in \cite[Eq.~(17)]{gorbani2025unified}, while the unknown subsystem $\Sigma_{2}$ is represented by the dynamics of the jet turbines thrust.
To handle the inherent non-linearity of the turbines, we update the Hankel matrices of $\Sigma_2$ at each control step using the most recent input-output data. This online update represents a deviation from the LTI assumptions under which our theoretical guarantees in Section \ref{sec:mainResults} were derived. Consequently, the formal proofs of recursive feasibility and practical stability may not directly carry over.
However, the online update of the Hankel matrices is a well-established approach, and the rationale is that for a sufficiently fast sampling rate, the dynamics can be locally approximated by LTI models~\cite{berberich2022linear}, where the resulting linearization error is treated as a disturbance that the slack variables are designed to absorb. In the case of iRonCub, such adaptation is essential to cope with the variability of turbine behavior under different operating conditions and environmental disturbances.

The offline data trajectory ${u}_2^d$ used to build the Hankel matrices is obtained by exciting the turbines with a frequency-sweeping step inputs, which yields persistently exciting inputs of order $L+2n$ (verified by rank condition). During closed-loop operation, sufficient excitation can be maintained by adding suitable probing components to the applied input (e.g., small-amplitude perturbation or scheduled reference variations); in our simulations, the natural reference changes already provided enough excitation, so additional probing was not required. For the jet turbines, physical insight suggests a low-order response, so the value of $n_2$ was chosen as 2.\looseness=-1

To quantify the effect of the online Hankel update, we consider two variants of the proposed controller: DFMPC (online Hankel), which updates the data matrices at every control step, and DFMPC (fixed Hankel), which instead adheres to the assumptions underlying the analysis in Section III, reporting tracking RMSE and average/maximum solver runtime.\looseness=-1

\vspace{-6pt}
\subsection{Simulation Benchmarks and Ablations}
\vspace{-3pt}

{
We evaluate four controllers on iRonCub: (i) \textbf{DFMPC (online Hankel)}, (ii) \textbf{DFMPC (fixed Hankel)}, (iii) \textbf{rSD-MPC} (robust semi–data-driven MPC proposed in \cite{zieglmeier2025semi}), and (iv) the \textbf{MPC} proposed in \cite{gorbani2025unified}, which relies on a second-order linear approximation of the turbine dynamics. All controllers share the same horizons and solver settings, while the weights have been tuned for each controller. We model the measurement noise through a uniform distribution in $[-0.5,0.5]$, and set $\epsilon=0.5$. Figure~\ref{fig:ironcub_tracking} shows CoM and attitude tracking; Table~\ref{tab:benchmarks} reports horizons, runtime, and tracking errors.}

{
\textbf{Tracking:}
DFMPC (online) achieves the lowest CoM RMSE ($0.27\,\mathrm{m}$) compared with model-based MPC ($0.47\,\mathrm{m}$) and rSD-MPC ($0.28\,\mathrm{m}$). Orientation errors are also lower for all data-driven variants. The fixed-Hankel ablation attains similar rotation accuracy but worse CoM tracking.}

{
\textbf{Ablation:}
The fixed-Hankel controller, which satisfies the LTI/fixed-window assumptions, shows degraded tracking relative to the online version, confirming that online updates improve prediction quality in practice.}

{
\textbf{Runtime:}
DFMPC (online) runs in $6.4\,\mathrm{ms}$ on average ($9.7\,\mathrm{ms}$ max), remaining real-time at $100\,\mathrm{Hz}$. rSD-MPC averages $7.3\,\mathrm{ms}$ ($13.2\,\mathrm{ms}$ max), while the model-based MPC is fastest at $2.2\,\mathrm{ms}$. All controllers used OSQP as solver~\cite{stellato2020osqp}, keeping the default settings and enabling the polishing option.\looseness=-1}

\begin{table}[]
\scriptsize
\centering
\vspace{4pt}
\caption{{Hankel matrix parameters (L,N), average|max runtime, and tracking error for DFMPC (online/fixed), rSD-MPC, and MPC}}
\vspace{-5pt}
\label{tab:benchmarks}
\begin{tabular}{ccccccc}
\hline
\multirow{2}{*}{Method} & \multirow{2}{*}{L} & \multirow{2}{*}{N} & \multicolumn{2}{c}{Time [ms]} & \multicolumn{2}{c}{RMSE}         \\
                        &                    &                    & avg        & max         & CoM {[}m{]} & Rotation {[}deg{]} \\ \hline
DFMPC (online)          & 15                 & 300                & 6.38       & 9.69        & 0.27        & 0.59               \\
\rowcolor{gray!15}
DFMPC (fixed)           & 15                 & 300                & 6.59       & 11.74       & 0.36        & 0.54               \\
rSD-MPC                 & 15                 & 300                & 7.31       & 13.19       & 0.28        & 2.17               \\
\rowcolor{gray!15}
MPC                     & -                  & -                  & 2.2        & 4.3         & 0.47        & 2.46              \\ \hline
\end{tabular}
\vspace{-15pt}
\end{table}

\section{Conclusion}
\label{sec:conclusion}

This paper presented a Data-Fused Model Predictive Control (DFMPC) framework that integrates physics-based models with data-driven representations while explicitly handling measurement noise and enabling piecewise constant reference tracking. We established theoretical guarantees of recursive feasibility and practical stability under input-output constraints.
The scheme was validated on the iRonCub robot, where the momentum dynamics are well understood but the turbine dynamics remain difficult to model reliably. Simulation results demonstrated that the DFMPC improves tracking accuracy compared to a purely model-based baseline, while remaining computationally feasible for real-time implementation.
Future work will focus on formally extending the theoretical guarantees of recursive feasibility and stability to this class of adaptive data-driven controllers for nonlinear systems.\looseness=-1

\balance
\bibliographystyle{IEEEtran}
\bibliography{IEEEexample}

\end{document}